\definecolor{linkblue}{named}{Blue}
\newcommand{\etal}{\emph{et al.}}
\newtheorem{theorem}{Theorem}[section]
\newtheorem{corollary}[theorem]{Corollary}
\newtheorem{lemma}[theorem]{Lemma}
\DeclareMathOperator{\rank}{rank}
\newcommand{\R}{\mathbb{R}}
\title{\MakeUppercase{Compatible Connectivity-Augmentation \newline of Planar Disconnected Graphs}}
\author{Greg Aloupis,\thanks{Department of Computer Science, Tufts University, 
                             \email{aloupis.greg@gmail.com}}\,\,
       Luis Barba,\thanks{School of Computer Science, Carleton University
                          and Département d'Informatique, 
                          Université Libre de Bruxelles,
                          \email{lbarbafl@ulb.ac.be}}\,\,
       Paz Carmi,\thanks{Department of Computer Science,
                         Ben-Gurion University of the Negev,
                         \email{carmip@cs.bgu.ac.il}}\,\,
       Vida Dujmović,\thanks{School of Computer Science 
                             and Electrical Engineering,
                             University of Ottawa,
                             \email{vida.dujmovic@uottawa.ca}}\,\,
       Fabrizio Frati,\thanks{School of Information Technologies,
                              The University of Sydney,
                              \email{fabrizio.frati@sydney.edu.au}}\,\,
       and Pat Morin\thanks{School of Computer Science, Carleton University,
                            \email{morin@scs.carleton.ca}}}
\begin{document}

\begin{titlepage}

\maketitle
\begin{abstract}
Motivated by applications to graph morphing, we consider the following
\emph{compatible connectivity-augmentation problem}: We are given
a labelled $n$-vertex planar graph, $\mathcal{G}$, that has $r\ge 2$
connected components, and $k\ge 2$ isomorphic planar straight-line drawings,
$G_1,\ldots,G_k$, of $\mathcal{G}$. We wish to augment $\mathcal G$
by adding  vertices and edges to make it connected in such a way that
these vertices and edges can be added to $G_1,\ldots,G_k$ as points and
straight-line segments, respectively, to obtain $k$ planar straight-line
drawings isomorphic to the augmentation of $\mathcal G$.  We show
that adding $\Theta(nr^{1-1/k})$ edges and vertices to $\mathcal{G}$
is always sufficient and sometimes necessary to achieve this goal.
The upper bound holds for all $r\in\{2,\ldots,n\}$ and $k\ge 2$ and is
achievable by an algorithm whose running time is $O(nr^{1-1/k})$ for
$k=O(1)$ and whose running time is $O(kn^2)$ for general values of $k$.
The lower bound holds for all $r\in\{2,\ldots,n/4\}$ and $k\ge 2$.
\end{abstract}

\end{titlepage}


\section{Introduction}

Consider the following problem, which will be more carefully formalized
below.  We are given several different labelled planar straight-line drawings
(or simply drawings) of the same disconnected labelled graph, $\mathcal G$.
We wish to make $\mathcal G$ connected by adding vertices and edges in
such a way that these vertices and edges can also be added to the 
drawings of $\mathcal G$ while preserving planarity.  
The objective is to do this while minimizing the number
of edges and vertices added.  As the example in Figure~\ref{fig:bad-example} shows, it is not always possible to just add edges to $\mathcal G$; sometimes additional vertices are necessary.

\begin{figure}[hb]
  \centering{
    \begin{tabular}{ccc}
      \includegraphics{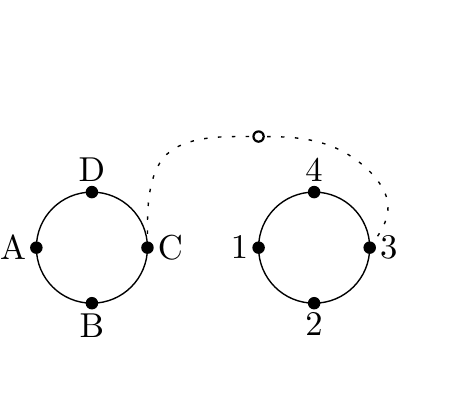} &
      \includegraphics{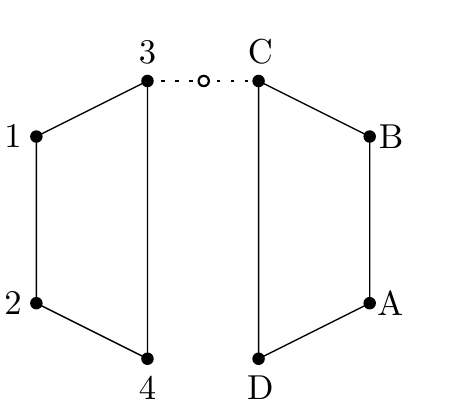} & 
      \includegraphics{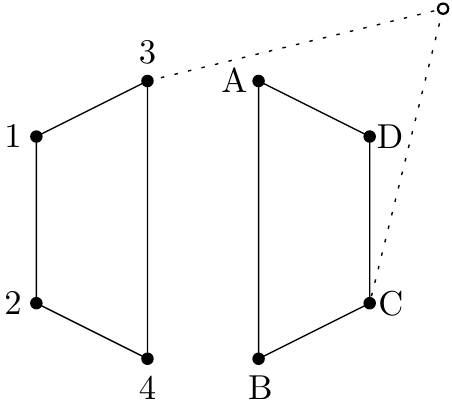} \\
      $\mathcal{G}$ & $G_1$ & $G_2$ 
    \end{tabular}
  }
  \caption{Two drawings, $G_1$ and $G_2$, of the same graph, $\mathcal{G}$, where making $\mathcal{G}$ connected requires the addition both of edges and vertices. In this case, $\mathcal G$ is made connected by adding the hollow vertex and two dashed edges.}
  \label{fig:bad-example}
\end{figure}

The motivation for this work comes from the
problem of morphing planar graphs, which has many applications
\cite{erten.kobourov.ea:intersection-free,friedrich.eades:graph,gotsman.surazhsky:guaranteed,surazhsky.gotsman:controllable,surazhsky.gotsman:intrinsic}
including computer animation.  Imagine an animator who wishes to animate a
scene in which a character's expression goes from neutral, to surprised,
to happy (see Figure~\ref{fig:faces}). The animator can draw these
three faces, but does not want to hand-draw the 30--60 frames required
to animate the change of expression.  The strokes used to draw the
character's features can be converted into paths and these can be merged
into components corresponding to the character's eyes, nose, mouth and
so on.  A correspondence between the same elements in different pictures
is also given.\footnote{In many cases, the correspondence is a byproduct
of the creation process. For example, in Figure~\ref{fig:faces}, the
second two faces were obtained by copying and then editing the first one.}
Thus, the input is three isomorphic drawings of the same planar graph.

\begin{figure}
  \centering{ 
  \begin{tabular}{c@{\hspace{.3cm}}c@{\hspace{.3cm}}c@{\hspace{.3cm}}c}
    \includegraphics{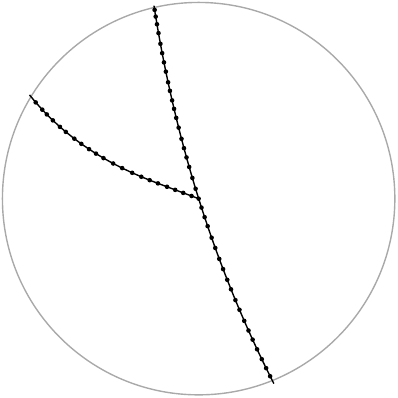} &
    \includegraphics{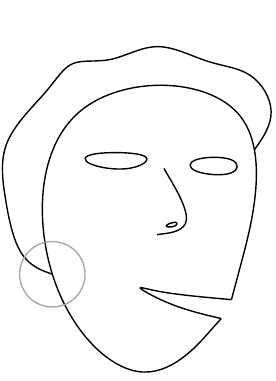} &
    \includegraphics{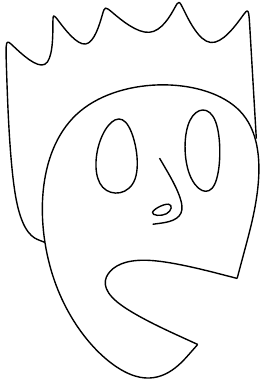} &
    \includegraphics{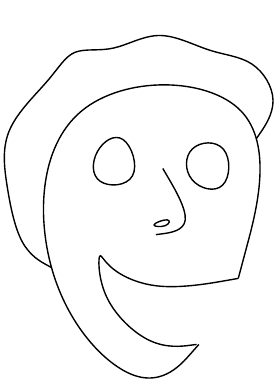} 
  \end{tabular}}
  \caption{Computer-assisted animation frequently involves morphing between
   a sequence of drawings of the same planar graph.  Zooming in on a section
   of the image reveals that the artist's strokes are approximated by polygonal
   paths}
  \label{fig:faces}
\end{figure}

In this setting, animating the face becomes a problem of
\emph{morphing} (i.e., continuously deforming) one drawing of
a planar graph into another drawing of the same planar graph
while maintaining planarity of the drawing throughout the
deformation. This morphing problem has been studied since 1944,
when Cairns \cite{cairns:deformations} showed such a transformation
always exists.  Since then, a sequence of results has shown that morphs
can be done efficiently, so that the motion can be described concisely
\cite{alamdari.angelini.ea:morphing,angelini.dalozzo.ea:morphing,grunbaum.shephard:geometry,thomassen:deformations}.
The most recent such result \cite{angelini.dalozzo.ea:morphing} shows
that any planar drawing of an $n$-vertex \emph{connected} planar graph
can be morphed into any isomorphic drawing using a sequence of $O(n)$
\emph{linear morphs}, in which vertices move along linear trajectories
at constant speed.

The morphing algorithms discussed above require that the input graph,
$\mathcal{G}$, be connected. In many applications of morphing (for example
in Figure~\ref{fig:faces}) the input graph is not connected. Before these
morphing algorithms can be used, $\mathcal{G}$ must be augmented into a
connected graph, $\mathcal H$, but this augmentation must be compatible
with the drawings of $\mathcal{G}$.  At the same time, the complexity
of the morph produced by a morphing algorithm depends on the number of
vertices of $\mathcal H$.  Therefore, we want to find an augmentation
with the fewest number of vertices.  This motivates the theoretical
question studied in the current paper.


\subsection{Formal Problem Statement and Main Result}

A \emph{drawing} of a graph $\mathcal{G}=(V,E)$ is a one-to-one
function $\psi\colon V\to\R^2$.  A drawing is \emph{planar} if (a)~for
every pair of edges $uw$ and $xy$ in $E$, the open line segment with
endpoints $\psi(u)$ and $\psi(w)$ is disjoint from the open line
segment with endpoints $\psi(x)$ and $\psi(y)$ and (b)~for every edge
$uw$ and every vertex $y$, $\psi(y)$ is not contained in the open line
segment with endpoints $\psi(u)$ and $\psi(w)$.  Two planar drawings,
$\psi_1$ and $\psi_2$, of $\mathcal{G}$ are \emph{isomorphic} if
there exists a continuous family of planar drawings $\{\psi^{(t)}
\colon 0\le t\le 1\}$ of $\mathcal{G}$ such that $\psi^{(0)}=\psi_1$
and $\psi^{(1)}=\psi_2$.\footnote{By Cairn's result, this is equivalent
to saying that the two drawings of $G$ have the same rotation schemes,
the same cycle-vertex containment relationship, and the same outer face.}

We call a graph, $G=(V,E)$ a \emph{geometric planar graph} if it is
the image of some planar drawing of a graph $\mathcal{G}=(\mathcal
V, \mathcal E)$.  That is, $V(G)=\{\psi(v): v\in V(\mathcal{G})\}$,
$E(G)=\{(\psi(u),\psi(w)) : (u,w)\in E(\mathcal G)\}$, and $\psi$ is a planar
drawing of $\mathcal G$.  When clear from context, we will sometimes
treat a geometric planar graph interchangeably with the set of points
and line segments defined by its vertices and edges, respectively.

We will avoid repeatedly referencing drawing
functions like $\psi$.  Instead, we will talk about a graph $\mathcal{G}$
and $k$ isomorphic drawings $G_1,\ldots,G_k$ of $\mathcal G$.  This
means that each $G_i$ is the geometric graph given by the drawing of
$\mathcal{G}$ with some function $\psi_i$ and that $\psi_1,\ldots,\psi_k$
are pairwise isomorphic.  When necessary, we may talk about the vertex $v$
in $G_i$ where $v$ is actually a vertex of $\mathcal{G}$; this should
be taken to mean the vertex $\psi_i(v)$ in $G_i$.  

We are now ready to state the main problem studied in this paper.
Given $k>1$ planar isomorphic drawings $G_1, \ldots, G_k$
of $\mathcal G$, a \emph{compatible augmentation}, $\mathcal H$, of
$\mathcal G$ is a supergraph of $\mathcal G$ such that (1) $\mathcal H$
is connected, and (2) there exist planar isomorphic drawings,
$H_1, \ldots, H_k$, of $\mathcal H$ such that $H_i\supset G_i$ for every
$i\in\{1,\ldots,k\}$.  We prove the following result:

\noindent\textbf{Main Result:} {\itshape If $\mathcal{G}$ is a graph
with $n$ vertices and $r\in\{2,\ldots,n\}$ connected components and
$G_1,\ldots,G_k$, $k\ge 2$, are isomorphic planar drawings of $\mathcal
G$, then there always exists a compatible augmentation of $\mathcal G$
whose size is $O(nr^{1-1/k})$.

Furthermore, this bound is tight; for every $r\in\{2,\ldots,\lfloor n/4\rfloor\}$ and $k\ge 2$, there exists a graph $\mathcal G$
with $r$ components and $k$ isomorphic planar drawings for which any compatible
augmentation has size $\Omega(nr^{1-1/k})$.}

These results show that the (worst-case) cost of an augmentation is very
sensitive to the number, $k$, of drawings, but only up to a point.
For a fixed value of $r$, our bounds range from $\Theta(nr^{1/2})$ (when
$k=2$) to $\Theta(nr)$ (for $k\ge \log r$).  On the other hand, for the
common case where $k=2$, our bounds vary from $\Theta(n)$ (when $r\in
O(1)$) up to $\Theta(n^{3/2})$ (when $r\in\Theta(n)$).  Neither $r$ nor $k$
causes the complexity of the augmentation to blow up beyond $\Theta(n^2)$.

\subsection{Related Work}

To the best of our knowledge, there is little work on
compatible connectivity-augmentation of planar graphs, though
there is work on isomorphic triangulations of polygons.  Refer to
Figure~\ref{fig:compatible-triangs}.  In this setting, the graph
$\mathcal{G}$ is a cycle and one has two non-crossing drawings, $P$
and $Q$, of $\mathcal G$. The goal is to augment $\mathcal G$ (and the
two drawings $P$ and $Q$) so that $\mathcal G$ becomes a near-triangulation,
and $P$ and $Q$ become (geometric) triangulations of the interiors
of the polygons whose boundaries are $P$ and $Q$.  Aronov \etal\
\cite{aronov.seidel.ea:on} showed that this can always
be accomplished with the addition of $O(n^2)$ vertices and that
$\Omega(n^2)$ vertices are sometimes necessary.  Kranakis and Urrutia
\cite{kranakis.urrutia:isomorphic} showed that this result can be made
sensitive to the number of reflex vertices of $P$ and $Q$, so that the
number of triangles required is $O(n+pq)$ where $p$ and $q$ are the
number of reflex vertices of $p$ and $q$, respectively.

\begin{figure}
  \centering{
    \includegraphics{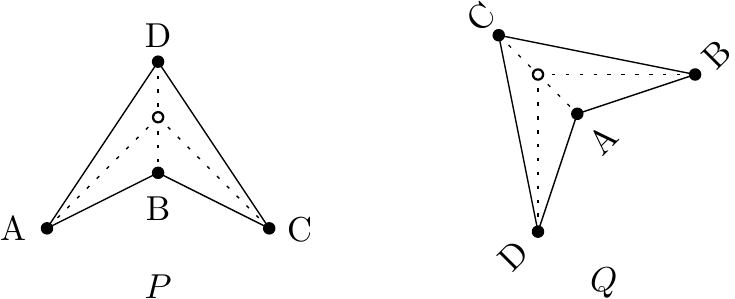}
  }
  \caption{Compatible triangulations of two 4-gons $P$ and $Q$.}
  \label{fig:compatible-triangs}
\end{figure}

Babikov \etal\ \cite{babikov.souvaine.ea:constructing} showed that the result of Aronov \etal\ can be extended to polygons with holes. This work is the most closely related to ours because it encounters (the special case $k=2$ of) our problem as a subproblem. In their setting, the graph $\mathcal G$ is a collection of $r$ cycles, the drawings $P$ and $Q$ are such that one cycle, $\mathcal C$, of $\mathcal G$ contains all the others in its interior and no other pair of cycles is nested in $P$ or $Q$. In the first stage of their algorithm, they build a connected supergraph $\mathcal{H}'$ of $\mathcal{G}$, but their supergraph has size $\Theta(n^2)$ in the worst case.  The main theorem in the current paper shows that this step of their algorithm could be done with a graph $\mathcal{H}'$ having only $O(nr^{1/2})$ edges (but completing this graph to a triangulation may still requires $\Omega(n^2)$ edges in the worst case).

Finally, several papers have dealt with the problem
of increasing the connectivity of a (single) geometric
planar graph while adding few vertices and edges.  Abellenas
\etal~\cite{abellanas.olaverri.ea:augmenting} consider the problem of
adding edges to a planar drawing in order to make it 2-edge connected
and showed that $\lfloor(2n-2)/3\rfloor$ edge are sometimes necessary
and $6n/7$ edges are always sufficient.  T\'oth \cite{toth:connectivity}
later obtained the tight upper-bound of $\lfloor(2n-2)/3\rfloor$ for
the same problem.  Rutter and Wolff \cite{rutter.wolff:augmenting} show
that finding the minimum number of edges required to achieve 2-edge
connectivity is NP-hard.

\subsection{Outline}

To guide the reader, we give a rough sketch of our upper bound proof,
which is illustrated in Figure~\ref{figure:example}.  We will assume,
for the sake of simplicity, that every component has at least one vertex
incident to the outer face.

For each component, $\mathcal C_i$, of $\mathcal G$ we select
a distinguished \emph{corner}, $a_i$, of $G_1$. (A corner is the
space between two consecutive edges incident to some vertex of the
outer face). The corner $a_i$ is called the \emph{attachment corner}
for component $\mathcal C_i$.  Notice that, since $G_1,\ldots,G_k$
are isomorphic, $a_i$ appears as a corner in each of $G_1,\ldots,G_k$.
The augmentation that we ultimately create will consist of a path that
connects to each component $\mathcal C_i$ at its attachment corner $a_i$.

\begin{figure}
  \begin{center}
   \begin{tabular}{c}
     \includegraphics{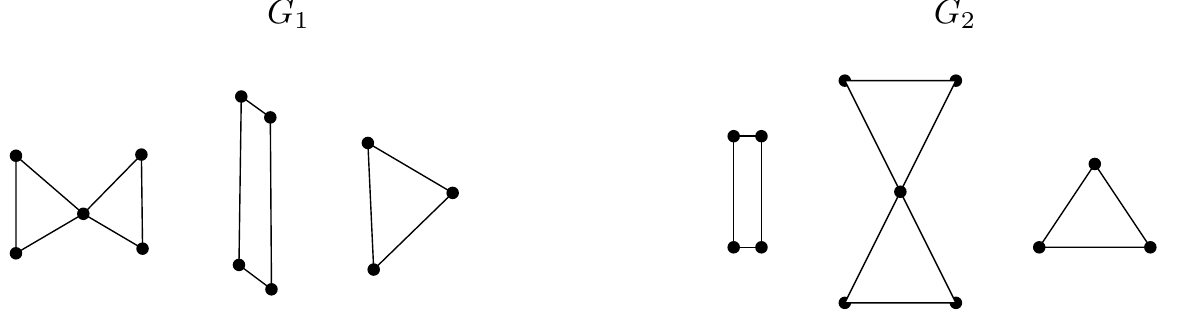} \\[2ex]
     $\Downarrow$ \\
     \includegraphics{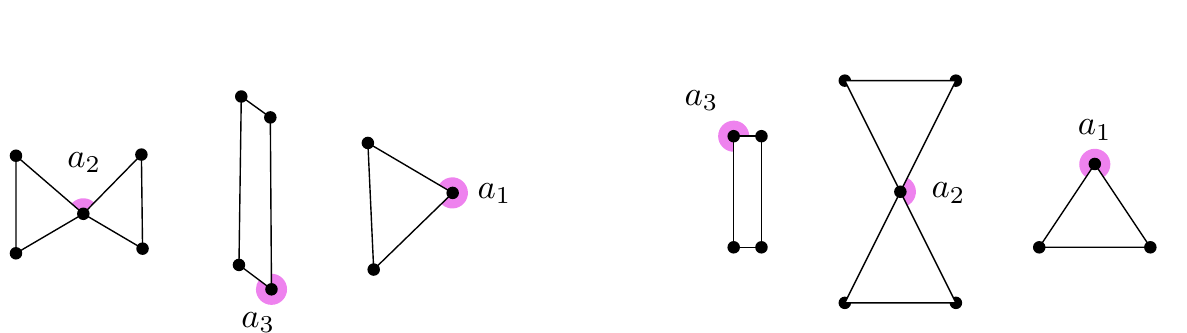} \\[2ex]
     $\Downarrow$ \\
     \includegraphics{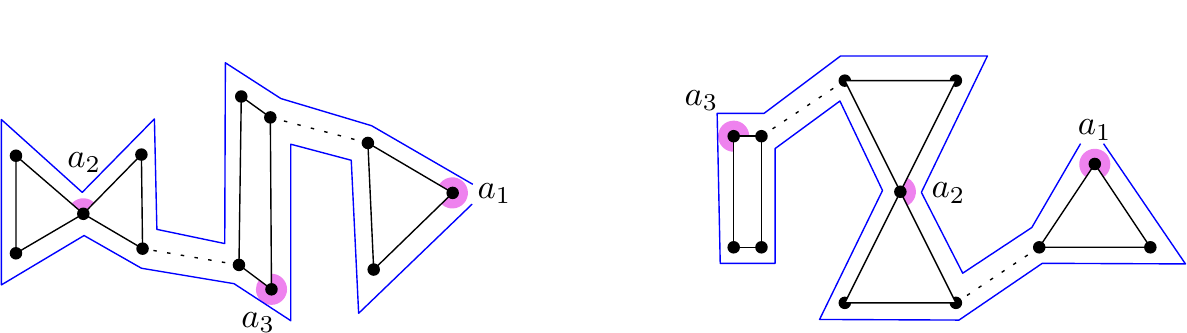} \\[2ex]
     $\Downarrow$ \\
     Hamiltonian Path Algorithm Gives Path $a_3,a_2,a_1$ \\[2ex]
     $\Downarrow$ \\
     \includegraphics{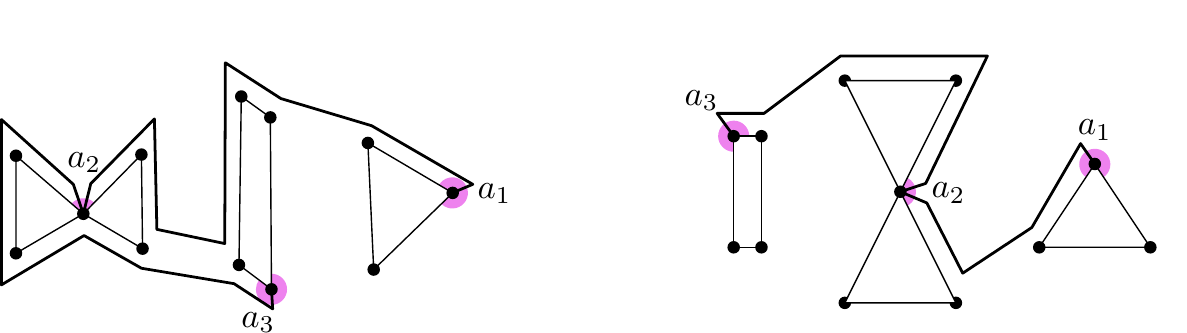} 
   \end{tabular}
  \end{center}
  \caption{The algorithm for making a compatible augmentation of $\mathcal{G}$ works by defining corners $a_1,\ldots,a_r$, taking a spanning path on each augmented drawing $G_i^*$ that visits all corners, and using this information to compute a permutation of $a_1,\ldots,a_r$ that can be drawn efficiently in each $G_i$.}
  \label{figure:example}
\end{figure}

Next, for each drawing, $G_j$, we add $r-1$ edges to make $G_j$ into a
connected graph, $G_j^*$; these edges are not, in general, edges that
take part in the final augmentation of $G_1,\ldots,G_k$ (in fact the
edges of $G_j^*$ not in $G_j$ might be different from the edges of
$G_h^*$ not in $G_h$, if $j\neq h$).  We then traverse the boundary
of the outer face of $G_j^*$ to obtain a polygonal path, $\varphi_j$,
of length $O(n)$ that comes close to every corner of $G_j$.  The path
$\varphi_j$ is then used to define an integer distance $d_j(a_\ell,a_m)$
for any two attachment corners $a_\ell$ and $a_m$. This distance includes
information about the number of edges of $\varphi_j$ between $a_\ell$
and $a_m$ as well the sizes of the some of the components visited while
walking from $a_\ell$ to $a_m$ along $\varphi_j$.

Next, we take a leap into $k$ dimensions by using the distance
functions $d_1,\ldots,d_k$ to produce a $k$-dimensional point set
$X=\{x_1,\ldots,x_r\}$ that lives in a hypercube of side-length $O(n)$.
This mapping has the property that, by adding a path of length $O(\|x_\ell
-x_m\|_\infty)$ to $\mathcal G$, the attachment corners $a_\ell$ and $a_m$
can be joined in each of $G_1,\ldots,G_k$ while preserving planarity.

Now, since the point set $X$ is in $\R^k$, has $r$ points, and lives in a
hypercube of side-length $O(n)$, a classic argument about the geometric
Travelling Salesman Problem \cite{few:shortest,moran:on} implies that it has a
spanning path whose length, measured in the $\ell_\infty$ norm, is
$O(nr^{1-1/k})$.\footnote{The original argument was for the standard
$\ell_2$ norm and has an extra $\sqrt{k}$ factor.  This factor disappears
in the $\ell_\infty$ norm.  See Appendix~\ref{app:uniform-norm} for details.}
This implies that $\mathcal G$ can be made connected with a collection
of $r-1$ paths, whose endpoints are the corners $a_1,\ldots,a_r$, having
total size $O(nr^{1-1/k})$, and that each of these paths can be
drawn in a planar fashion in each of $G_1,\ldots,G_k$.

At this point, all that remains is to show that each of these $r-1$ paths
be drawn in each of $G_1,\ldots,G_k$ without crossing
each other.  This part of the proof involves carefully winding these
paths around the components in $G_1,\ldots,G_k$ using paths close
to the paths $\varphi_1,\ldots,\varphi_k$ defined above. This part
of the proof resembles the first part of the proof of Babikov \etal\
\cite{babikov.souvaine.ea:constructing}, but is complicated by the fact
that we have to be quite careful that the number of edges in these paths
remains in $O(nr^{1-1/k})$. 


The remainder of the paper is organized as follows: In
Section~\ref{section:Trivial components} we start by solving
the special case in which the graph $G$ has no edges. This
special case is already non-trivial and introduces some of the
main ideas used in solving the full problem, which is tackled in
Section~\ref{section:General}. Section~\ref{section:Lower bound}
presents a lower bound construction that matches our upper bound.

\section{Upper bounds for trivial components}\label{section:Trivial components}
As a warmup, we consider a (trivial) graph containing $n$ vertices and no edges.
Before constructing a compatible augmentation, we provide a subroutine
that constructs a ``short'' planar spanning path of a given ordered set of points.

\subsection{Spanning paths of point sets}
Let $S$ be a set of $n$ points in the plane with distinct $x$-coordinates. 
Given a point $v\in S$, let $\rank(v)$ denote the number of points of $S$ that lie to the left of (having smaller $x$-coordinate than) $v$.

Given an arbitrary order $(v_1, v_2, \ldots, v_n)$ of the points of $S$, we want to construct a path $R$ that connects them in this order and such that:  
\[
   |R|  = O\left(\sum_{i=1}^{n-1} |\rank(v_i) - \rank(v_{i+1})| \right).
\]
This paper uses the $|\cdot|$ operator in several different ways, depending on the type of its argument.  For a real number, $x$, $|x|$ is the absolute value of $x$.  For a walk, $R=(r_0,\ldots,r_k)$, $|R|=k$ denotes the number of edges traversed by $R$. For a (weakly-)simple polygon, $P$ whose vertices---as encountered during a counterclockwise traversal---are $(a_1,\ldots,a_k)$, $|P|=k$, denotes the number of edges of $P$.

Consider a horizontal line $\ell$ below $S$ and let $\pi$ be the closed halfspace supported by $\ell$ that contains $S$.  We present an algorithm that constructs $R$ iteratively; during the $i$th iteration of the algorithm, the path is extended with $O(|\rank(v_i) - \rank(v_{i-1})|)$ vertices to include $v_i$.  For each $i\in \{1,\dots,n\}$, after the $i$th iteration of the algorithm, we maintain the invariant that $\ell$ does not intersect $R$, and we also maintain the \emph{escape invariant} which is defined as follows:
For each $j\in \{i, \dots, n\}$, there is a closed cone $\Delta_{j}$ with apex $u_{j}$ such that (1) $u_j$ lies above $v_j$ and has the same $x$-coordinate as $v_j$ ($u_j =v_j$ if $j = i$), (2) $\Delta_j$ contains $v_j$ and no other point of $S$, (3) $\Delta_j$ contains the ray originating at $v_j$ in the direction of the negative $y$-axis, (4) $\Delta_j$ does not intersect $R$, and (5) $\Delta_h$ and $\Delta_j$ are disjoint inside $\pi$, for every $h\in \{i,\dots,n\}$ with $h\neq j$.

Initialize $R$ as a path that consists of the single vertex $v_1$. In order to establish the escape invariant, we define $u_1=v_1$; also, for each $j\in \{2,\dots,n\}$, we define $u_j$ as an arbitrary translation up of $v_j$; further, for each $j\in \{1,\dots,n\}$, we let $\Delta_j$ be a cone with apex on $u_j$ sufficiently narrow so that these cones do not intersect inside $\pi$; see Figure~\ref{fig:Escape Invariant}.

\begin{figure}[tb]
\centering 
\includegraphics{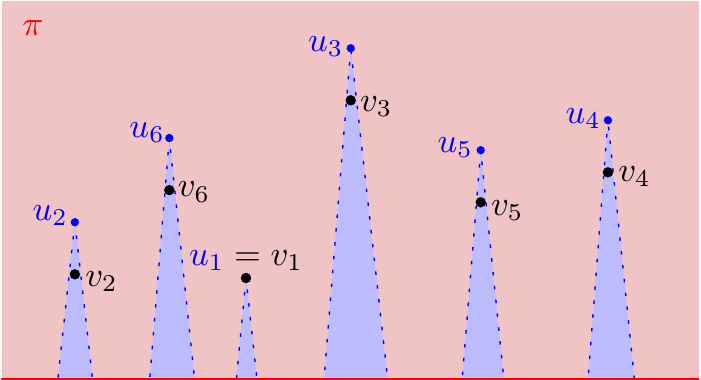}
\caption{The halfplane $\pi$ and the cones $\Delta_1,\ldots,\Delta_n$ with apexes at $u_1,\ldots,u_n$.}
\label{fig:Escape Invariant}
\end{figure}


Now assume that $R$ is a path connecting $v_1$ with $v_j$, for some $j\in \{1,\dots,n-1\}$. We extend $R$ by appending a
path that connects $v_j$ with $v_{j+1}$.

First, we translate $\Delta_{j+1}$ down until its apex $u_{j+1}$ coincides with $v_{j+1}$. Let $\pi_j$ be the closure of the set obtained from $\pi$ by removing $\Delta_h$, for every $h\in\{j,j+1,\ldots,n\}$; see Figure~\ref{fig:Dented Halfspace} (right). That is, $\pi_j$ is a halfspace with dents made by the removal of $n-j+1$ cones.
Observe that, for every pair of apexes $u_i$ and $u_h$, with $i,h\geq j$, the boundary of $\pi_j$
contains a path from $u_i$ to $u_h$ with $O(|\rank(v_i)-\rank(v_h)|)$ edges.
Because $\ell$ does not intersect $R$ and by the escape invariant, the boundary of $\pi_j$ intersects $R$ only at $v_j$. Moreover, again by the escape invariant, for each $v_i$ with $i > j$, $v_i$ lies outside of~$\pi_j$ except for $v_{j+1}$ that lies on its boundary. Because both $v_j$ and $v_{j+1}$ lie on the boundary of $\pi_j$, which does not intersect $R$ other than at $v_j$, we can connect $v_j$ with $v_{j+1}$ via a path contained in the boundary of~$\pi_j$ with length $O(|\rank(v_j) - \rank(v_{j+1})|)$. In this way, we extend $R$ to a planar path that contains $v_1,\ldots,v_{j+1}$.

\begin{figure}[tb]
\centering
\includegraphics[width=.98\textwidth]{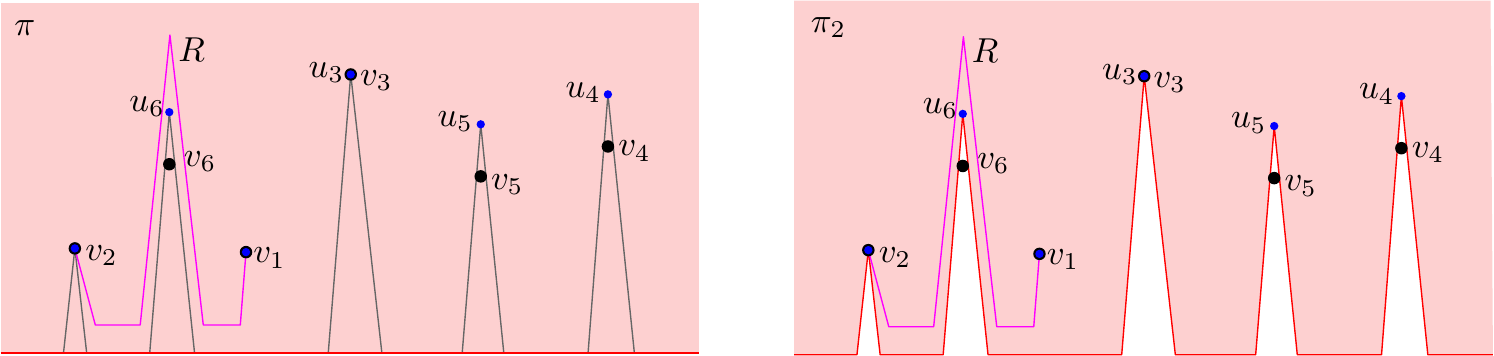}
\caption{The boundary of $\pi_2$ is in the path from $v_2$ to $v_3$.}
\label{fig:Dented Halfspace}
\end{figure}

After connecting $v_j$ with $v_{j+1}$, for each $h\in\{j{+}2,\ldots,n\}$, either $\Delta_h$ is disjoint from $R$, or it shares some portion of its boundary with $R$. However, the interior of $\Delta_h$ does not intersect $R$.
To preserve the escape invariant, for each $h\in\{j{+}2,\ldots,n\}$, we translate $\pi$ and $\Delta_h$ downwards by a sufficiently small amount, $\varepsilon$, and we scale $\Delta_{j+1}$ horizontally down, while keeping its apex at $v_{j+1}$. To conclude, each translated or scaled cone is contained in the previous one, $u_h$ lies above $v_h$, for each $h\in\{j+2,\ldots,n\}$, and $u_{j+1}$ coincides with $v_{j+1}$. Therefore, by choosing $\varepsilon$ sufficiently small, we maintain the escape invariant and obtain the following result.

\begin{lemma}\label{lemma:Compatible augmentation for trivial components}
Given an order $(v_1, \ldots, v_n)$ of the vertices of $S$, there exists a planar path $R$ that connects every point of $S$ in the given order such that the number of vertices of $R$ between $v_{i}$ and $v_{i+1}$ is $O(|\rank(v_i) - \rank(v_{i+1})|)$, for each $i\in \{1,\dots,n-1\}$.
\end{lemma}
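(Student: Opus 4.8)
The plan is to build $R$ incrementally, appending one subpath per step so that after step $i$ the constructed path has reached $v_i$, and to carry along enough geometric slack to guarantee that the next subpath can always be inserted without crossings. That slack is precisely the escape invariant: for every still-unvisited vertex $v_j$ I maintain a thin cone $\Delta_j$ with apex $u_j$ sitting above $v_j$ and opening downward past the baseline $\ell$, insisting that these cones are pairwise disjoint inside $\pi$ and avoid the current path $R$. Intuitively, each $\Delta_j$ is a private vertical corridor reserved for $v_j$'s eventual connection to the rest of the path, so that no matter how the path to date has been drawn, $v_j$ still has an unobstructed route down to the region below $S$.

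For the connection step from $v_j$ to $v_{j+1}$, I would first slide $\Delta_{j+1}$ straight down until its apex lands on $v_{j+1}$, then route the new subpath along the boundary of the \emph{dented halfspace} $\pi_j$ obtained by deleting the cones $\Delta_h$ with $h\ge j$ from $\pi$. Because $v_j$ and $v_{j+1}$ both lie on $\partial\pi_j$, and because the escape invariant forces $R$ to meet $\partial\pi_j$ only at $v_j$, this boundary arc is automatically planar with respect to everything already drawn. The length bound is the combinatorial heart of the argument: walking along $\partial\pi_j$ from $u_i$ to $u_h$ skirts exactly one dent per point of $S$ whose $x$-coordinate lies strictly between those of $v_i$ and $v_h$, each dent costing $O(1)$ edges, so the arc has $\Theta(|\rank(v_i)-\rank(v_h)|)$ edges. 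Specializing to $i=j$, $h=j+1$ gives the claimed $O(|\rank(v_j)-\rank(v_{j+1})|)$ edges for this step.

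The step I expect to be the real obstacle is restoring the invariant afterward, since the freshly added subpath may now run flush against the boundaries of the future cones $\Delta_h$, $h\ge j+2$. The remedy is that the new subpath only ever touches those cones along their boundaries and never penetrates their interiors; hence I can translate $\pi$ and each remaining cone downward by a tiny amount $\varepsilon$ (and shrink $\Delta_{j+1}$ horizontally about its fixed apex $v_{j+1}$) to peel the cones off of $R$ while keeping each $u_h$ strictly above its $v_h$, keeping the cones pairwise disjoint inside $\pi$, and keeping each cone free of all other points of $S$. The delicate bookkeeping is checking that a single $\varepsilon$ can satisfy all of these constraints simultaneously; since each constraint is an open condition (nonintersection with the finite path $R$, disjointness of finitely many cones, and exclusion of finitely many points), a sufficiently small $\varepsilon>0$ exists. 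Summing the per-step bounds over $i=1,\dots,n-1$ then yields a planar path connecting the $v_i$ in the prescribed order with the stated edge count between consecutive vertices, completing the proof.
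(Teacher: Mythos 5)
Your proposal is correct and takes essentially the same approach as the paper's proof: the same escape-invariant cones $\Delta_j$ reserved above each unvisited point, the same routing of each new subpath along the boundary of the dented halfspace $\pi_j$ after sliding $\Delta_{j+1}$ down onto $v_{j+1}$, and the same invariant-restoration step of translating $\pi$ and the remaining cones downward by a sufficiently small $\varepsilon$ while horizontally shrinking $\Delta_{j+1}$ about its apex. The only nominal discrepancy is your claim that each boundary arc has $\Theta(|\rank(v_i)-\rank(v_h)|)$ edges, where only the upper bound $O(|\rank(v_i)-\rank(v_h)|)$ holds in general (already-visited intermediate points no longer carry dents) --- but that is also all the lemma requires.
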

\begin{proof}
Recall that in each iteration, the algorithm computes a path connecting $v_j$ with $v_{j+1}$ that does not cross the portion of the path already constructed. Because this invariant is maintained throughout, the resulting path is planar.

Since the path that connects $v_j$ with $v_{j+1}$ follows the boundary of $\pi_j$ and since this boundary has length $O(|\rank(v_j) - \rank(v_{j+1})|)$ between $v_j$ and $v_{j+1}$, the path that connects $v_j$ with $v_{j+1}$ has length $O(|\rank(v_j) - \rank(v_{j+1})|)$. Consequently,  the total length of $R$ is given by $O\left(\sum_{i=1}^{n-1} |\rank(v_i) - \rank(v_{i+1}) |\right)$.
\end{proof}

\subsection{Compatible drawings of point sets}

Recall that in this section $\mathcal G$ is a graph with $n$ trivial components.
Let $G_1, \ldots, G_k$ be $k>1$ isomorphic drawings of $\mathcal G$, i.e., $G_i$ is a sequence of $n$ points in the plane.
Assume without loss of generality that no two points of $G_i$ share the same $x$-coordinate.
Given a vertex $v$ of $\mathcal G$, let $\rank_{G_i}(v)$ denote the number of points of $G_i$ having smaller $x$-coordinate than $v$, and let $x_v = (\rank_{G_1}(v), \ldots, \rank_{G_k}(v))$ be a point in the integer grid $[0;n-1]^k$ in $\mathbb{R}^k$.  Let $X = \{x_v : v\in V(\mathcal G)\}$ and let $P$ be the shortest Hamiltonian path of $X$ when distance is measured using the $\ell_\infty$ norm, so that
\[
   \|x_v-x_u\|_\infty = \max\{|\rank_{G_i}(v)-\rank_{G_i}(u)|:i\in\{1,\ldots,k\}\} \enspace .
\]
It is known that the length of $P$ is $O(n^{2-1/k})$ (see
Corollary~\ref{cor:tsp} in Appendix~\ref{app:uniform-norm}).  Note that
the order of the points of $P$ induces an order on the vertices of
$\mathcal G$ and hence, an order on the vertices of each $G_i$.

\begin{theorem}\label{theorem:points}
For each $i\in \{1,\dots,n\}$, we can construct a path $R_i$ of length $O(n^{2-1/k})$ that connects every point of $G_i$ so that $G_i\cup R_i$ is planar. Moreover, for any distinct $i,j\in \{1,\dots,n\}$, $G_i\cup R_i$ and $G_j\cup R_j$ are~isomorphic.
\end{theorem}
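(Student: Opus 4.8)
The plan is to use the single order induced by the shortest $\ell_\infty$ Hamiltonian path $P$ as a common visiting order for all $k$ drawings, so that the paths $R_1,\dots,R_k$ end up with the same combinatorial structure. Write the order that $P$ induces on $V(\mathcal G)$ as $\sigma=(w_1,\dots,w_n)$. For each drawing $G_i$, I would apply Lemma~\ref{lemma:Compatible augmentation for trivial components} with $S=G_i$ and the order $\sigma$ to obtain a planar path $R_i$ that connects the points of $G_i$ in the order $\sigma$, and such that the number of vertices of $R_i$ strictly between consecutive points $w_t$ and $w_{t+1}$ is $O(|\rank_{G_i}(w_t)-\rank_{G_i}(w_{t+1})|)$. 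By construction $G_i\cup R_i$ is planar (and, since $\mathcal G$ has no edges, it is just the path $R_i$ with the points of $G_i$ marked on it).

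The length bound is then immediate from the definition of $X$. For each segment $t$, the $i$th coordinate of $x_{w_t}-x_{w_{t+1}}$ is exactly $\rank_{G_i}(w_t)-\rank_{G_i}(w_{t+1})$, so $|\rank_{G_i}(w_t)-\rank_{G_i}(w_{t+1})|\le\|x_{w_t}-x_{w_{t+1}}\|_\infty$. Summing the per-segment bound from the lemma, and noting that $\sum_{t=1}^{n-1}\|x_{w_t}-x_{w_{t+1}}\|_\infty$ is precisely the $\ell_\infty$-length of $P$, gives
\[
  |R_i|=O\!\left(\sum_{t=1}^{n-1}|\rank_{G_i}(w_t)-\rank_{G_i}(w_{t+1})|\right)
       =O\!\left(\sum_{t=1}^{n-1}\|x_{w_t}-x_{w_{t+1}}\|_\infty\right)=O(n^{2-1/k}).
\]

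The part that needs care is the isomorphism, and this is where I expect the only real subtlety. The paths $R_i$ and $R_j$ both visit $w_1,\dots,w_n$ in the order $\sigma$, but the number of intermediate (Steiner) vertices the lemma inserts in segment $t$ can differ from drawing to drawing, so a priori $G_i\cup R_i$ and $G_j\cup R_j$ need not even have the same number of vertices. To fix this I would first make the combinatorial structure uniform: for each segment $t$ let $s_t$ be the maximum, over all $i$, of the number of Steiner vertices that $R_i$ uses between $w_t$ and $w_{t+1}$, and then pad each $R_i$ by subdividing its polygonal segments (placing extra degree-two vertices in the interiors of already-drawn segments) until it uses exactly $s_t$ Steiner vertices in segment $t$. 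Subdividing keeps the drawing planar and injective and does not change which pairs of segments meet, so each padded $G_i\cup R_i$ is still a planar drawing. Crucially, padding does not spoil the length bound: since $s_t=O(\max_i|\rank_{G_i}(w_t)-\rank_{G_i}(w_{t+1})|)=O(\|x_{w_t}-x_{w_{t+1}}\|_\infty)$, the total number of edges remains $O(\sum_t\|x_{w_t}-x_{w_{t+1}}\|_\infty)=O(n^{2-1/k})$.

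After padding, every $G_i\cup R_i$ is a planar straight-line drawing of one and the same abstract graph $\mathcal H$, namely the path obtained from $w_1,\dots,w_n$ by inserting $s_t$ subdivision vertices into segment $t$. Since $\mathcal H$ is a path, isomorphism of the drawings is then automatic from the characterization in the footnote to the definition of isomorphic drawings: every vertex of $\mathcal H$ has degree at most two, so the rotation schemes agree trivially; $\mathcal H$ has no cycle, so the cycle-vertex containment relation is empty in every drawing; and a path does not separate the plane, so each drawing has a single (outer) face. Hence $G_i\cup R_i$ and $G_j\cup R_j$ are isomorphic for every pair of drawings, which completes the argument.
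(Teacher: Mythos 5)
Your proposal is correct and follows essentially the same route as the paper: use the order induced by the $\ell_\infty$-shortest Hamiltonian path on $X$ for all drawings, invoke Lemma~\ref{lemma:Compatible augmentation for trivial components} in each $G_i$, and bound each segment by $\|x_{w_t}-x_{w_{t+1}}\|_\infty$. The only difference is expository: the paper sidesteps your padding step by defining the abstract path $\mathcal{H}$ first, with $O(d_j)$ vertices per segment where $d_j=\|x_{w_j}-x_{w_{j+1}}\|_\infty$, and then observing these are ``enough'' to realize each $R_i$ — your explicit subdivision argument (and the observation that any two planar drawings of a path are isomorphic) just makes rigorous what the paper leaves implicit.
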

\begin{proof}
By relabelling, let $(v_1, \ldots, v_n)$ denote the order of the vertices of $\mathcal{G}$ induced by the Hamiltonian path, $P$, described above.  The graph $\mathcal{H}$, which is an augmentation of $\mathcal{G}$, is a path that visits the vertices $v_1,\ldots,v_n$ in this order. Letting $d_j$ denote the $\ell_\infty$ distance between $x_{v_j}$ and $x_{v_{j+1}}$, the path $\mathcal{H}$ includes an additional $O(d_j)$ vertices between $v_{j}$ and $v_{j+1}$.  It follows that the number of vertices
in $\mathcal{H}$ is proportional to the length of $P$, which is $O(n^{2-1/k})$.

For each $G_i$, we use Lemma~\ref{lemma:Compatible augmentation for trivial components} to draw $\mathcal{H}$ as a planar path, $R_i$,
that connects the vertices $v_1,\ldots,v_n$ in this order in the drawing $G_i$.
Since $d_j\ge |\rank_{G_i}(v_j) - \rank_{G_i}(v_{j+1})|$, the $O(d_j)$
vertices in $\mathcal{H}$ between $v_j$ and $v_{j+1}$ are enough to
draw the $O(|\rank_{G_i}(v_j) - \rank_{G_i}(v_{j+1})|)$ vertices in $R_i$
between $v_j$ and $v_{j+1}$.
Since the vertices of each $G_i$ are connected in the same order,
$G_i\cup R_i$ is isomorphic to $G_j\cup R_j$ for each $i,j\in\{1,\ldots,k\}$.
\end{proof}

\section{The general problem}\label{section:General}
In this section, we extend the result presented in
Section~\ref{section:Trivial components} to graphs with
non-trivial components.  We follow the same general scheme used in
Section~\ref{section:Trivial components} for the case of trivial
(isolated vertex) components:  We define $k$ different
orderings of the components of $\mathcal G$ and use these orderings (and
the sizes of these components) to define an $r$-point set, $X$, in $\R^k$. A
short path that visits all points in $X$ is then translated back into
a short path, $R$, that visits all components of $\mathcal G$. The path
$R$ is then
added, as a polygonal path, $R_i$, to each drawing, $G_i$, of $\mathcal G$.

Unlike the case in which
components are isolated vertices, there is no natural ordering of the
components of $G_i$, so we must define one. Also, the drawing
of path $R_i$ is considerably more complicated.  In Section~\ref{section:Trivial components}, $R_i$ is drawn incrementally, and always passes above components that are not yet included in $R_i$ and below components that are already included in $R_i$.  In this section, we redefine ``above'' and ``below''. The number of edges required to go above or below a component depends on its size and structure.

\subsection{Preliminaries}\label{section:Preliminaries} 
Let $C$ be a connected geometric planar graph. Let $v_0, v_1, \ldots, v_k, v_0$ be the sequence of vertices of $C$ visited by a counterclockwise
Eulerian tour along the boundary of the outer face of $C$. Note that
$v_i$ may be equal to $v_j$ for some $i\neq j$.  A vertex $v_i$
in this sequence is called a \emph{corner} of $C$.  We consider the boundary of $C$, denoted by $\partial C$, to be the
boundary of the weakly-simple polygon $(v_0, \ldots, v_k, v_0)$ whose
vertex set is the set of corners of $C$.\footnote{More formally, $\partial C$ is the boundary of the unbounded component of $\mathbb{R}^2\setminus C$, when we treat $C$ as the union of all its edges (line segments) and vertices (points).}

Let $\varepsilon >0$. For each corner $v_i$ of $\partial C$, let $\ell_i$ be the half-line starting at $v_i$ that bisects the angle between the edges $v_{i-1}v_i$ and $v_i v_{i+1}$ in the outer face of $C$. Let $z_i$ be the point at distance $\varepsilon$ from $v_i$ along $\ell_i$. We call $z_i$ the \emph{$\varepsilon$-copy} of $v_i$. Let $\partial_\varepsilon C$ be the piecewise-linear cycle defined by the sequence $(z_0, z_1, \ldots, z_k, z_0)$. We call $\partial_\varepsilon C$ the \emph{$\varepsilon$-fattening} of $C$.
An $\varepsilon$-fattening $\partial_\varepsilon C$ is \emph{simple} if $\partial_\varepsilon C$  is a simple polygon that contains~$C$.
Note that $\partial_\varepsilon C$ is simple, provided that $\varepsilon$ is sufficiently small. In this paper, we consider only simple $\varepsilon$-fattenings; see Figure~\ref{fig:Blowing}. Note that the number of edges between two corners of $\partial C$ along the boundary of $C$ is the same as the number of edges between their $\varepsilon$-copies along~$\partial_\varepsilon C$.

\subsection{Connected augmentations}\label{section: connected augmentations}
Let $G$ be a geometric planar graph with $r$ connected components such that each component is adjacent to the outer face.
Two vertices are {\em visible} if the open segment joining them does not intersect $G$.
Let $T_G$ be a smallest set of edges of the visibility graph of $G$ that need to be added to $G$ to make it connected.
As there are always two components containing mutually visible vertices, we can connect them and repeat recursively.  Thus, $T_G$ has $r-1$ edges. (Loosely, we can think of $T_G$ as a spanning tree of $G$'s components.) Let $G^* = G\cup T_G$.  We say that $G^*$ is a \emph{connected augmentation} of $G$; see Figure~\ref{fig:Blowing}.

\begin{figure}[tb]
\centering
\includegraphics{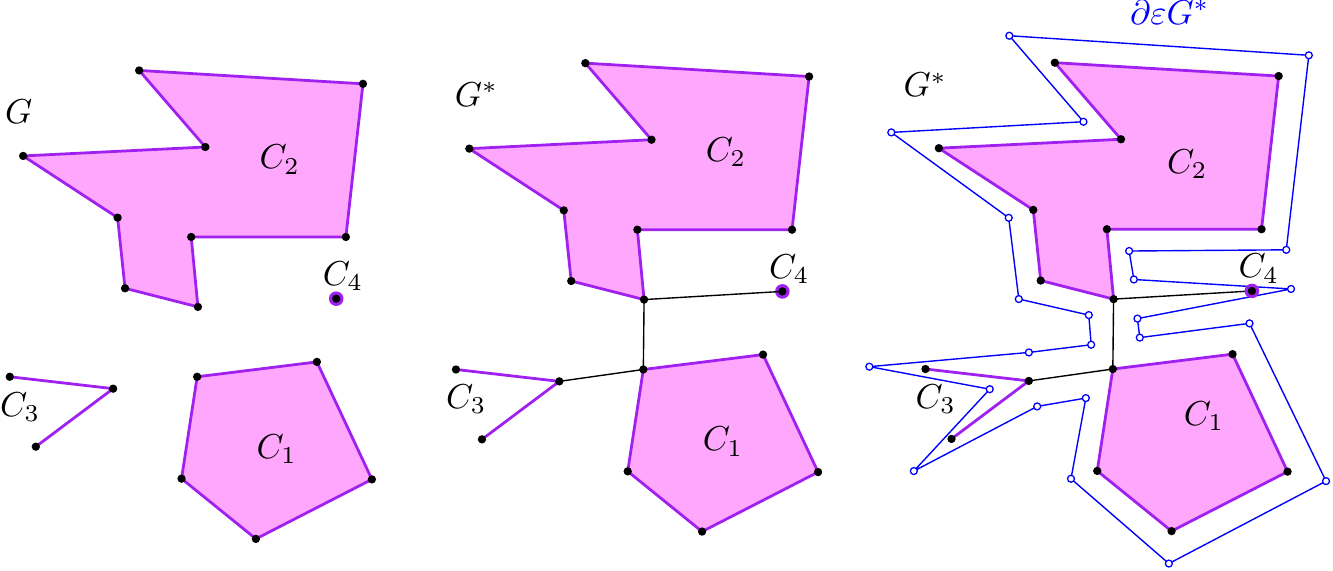}
\caption{The graph $G$ (left); a connected augmentation, $G^*$, of $G$ (middle); and the $\varepsilon$-fattening, $\partial_\varepsilon G^*$ (right).}
\label{fig:Blowing}
\end{figure}

Let $C_1, \ldots, C_r$ be the components of $G$.
For each $i\in \{1,\dots,r\}$, let $a_i\in C_i$ be an arbitrary corner of $\partial C_i$ (note that $a_i$ is adjacent to the outer face).
We call $a_i$ the \emph{attachment corner} of $C_i$.

Let $\varphi$ be the path on the corners of $\partial G^*$ (hence $\varphi$ is also a walk on the vertices of $\partial G^*$) obtained by splitting $\partial G^*$ at the corner $a_1$. That is, $\varphi$ is a path that visits every corner of $\partial G^*$ exactly once except for $a_1$, which is visited twice.
Given two corners $u$ and $v$ in $\partial G^*$, let $\varphi(u,v)$ denote
the unique path in $\varphi$ that connects $u$ with $v$. Let $A(u,v)$ be
the set of attachment corners of $G$ visited by $\varphi(u,v)$. For two attachment corners $u=a_i$ and $v=a_j$, define
\[
       \sigma_G(u,v) = |\varphi(u,v)| + 2\sum_{a_t\in A(u,v)}|\partial C_t|
           - |\partial C_i| - |\partial C_j| ,
\]
which we call the \emph{cost} of going from $u$ to $v$.  The definition of $\sigma_G(u,v)$ is designed to capture the fact that, if $u$ and $v$ occur consecutively on the path $R$ we construct, then the portion of $R$ between $u$ and $v$ will have length at least $|\varphi(u,v)|$ since it follows $\varphi$, and it may also take a detour around every attachment corner $a_i\in A(u,v)$.  If it takes this detour at some $a_i\in A(u,v)$, then it does so by walking around $\partial C_i$ which requires an additional $|\partial C_i|$ edges.

\begin{lemma}\label{lemma:Contained in integer grid}
    If $u=a_i$ is an attachment corner of $G$, then $\sigma_G(a_1, u) < 6n$. Moreover, if $v=a_j$, with $j> i$, is another attachment corner of $G$, then
  $\sigma_G(u, v) = \sigma_G(v, u) = \sigma_G(a_1, v)- \sigma_G(a_1, u)$.
\end{lemma}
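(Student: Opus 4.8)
The plan is to dispatch the three assertions separately, in increasing order of bookkeeping. The symmetry $\sigma_G(u,v)=\sigma_G(v,u)$ needs nothing: the unordered subpath $\varphi(u,v)$, its edge count $|\varphi(u,v)|$, the set $A(u,v)$, and the correction $-|\partial C_i|-|\partial C_j|$ are all insensitive to which endpoint is named first, so the two expressions agree term by term.

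For the bound $\sigma_G(a_1,u)<6n$, I would control each summand by the number of vertices it ``uses up''. The term $|\varphi(a_1,u)|$ is at most the full length $|\varphi|=|\partial G^*|$, and here I would invoke the standard planar fact that the boundary walk of the outer face of a connected planar graph on $m$ vertices has at most $2(m-1)$ edges: the boundary edges form a connected subgraph $B$ traversed by a single closed walk, each of its edges is used at most twice, and a short Euler's-formula count (every inner face of $B$ has at least three sides, or $B$ is a tree) turns this into $2(|V(B)|-1)\le 2(n-1)$. Applying the same fact to each component gives $|\partial C_t|\le 2(n_t-1)$ with $n_t=|V(C_t)|$, and since the components partition the vertices, $\sum_t|\partial C_t|\le 2(n-r)<2n$. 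Feeding these two estimates into the definition yields $\sigma_G(a_1,u)\le 2(n-1)+2\cdot 2(n-r)-|\partial C_1|-|\partial C_i|<2n+4n=6n$, the last two (nonnegative) terms only helping.

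For the identity $\sigma_G(u,v)=\sigma_G(a_1,v)-\sigma_G(a_1,u)$, the crux is that $\varphi(a_1,u)$ is a prefix of $\varphi(a_1,v)$. By the symmetry just proved, and by reading the indices as the order in which attachment corners are met while traversing $\varphi$ from its starting copy of $a_1$ (which is what $j>i$ should encode), I may assume $u$ is met before $v$; then $\varphi(a_1,v)$ splits at $u$ into $\varphi(a_1,u)$ followed by $\varphi(u,v)$. This gives two clean decompositions: the edge counts add, $|\varphi(a_1,v)|=|\varphi(a_1,u)|+|\varphi(u,v)|$, and the attachment-corner sets satisfy $A(a_1,v)=A(a_1,u)\cup A(u,v)$ with the two pieces meeting exactly in $\{u\}$. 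Hence $\sum_{A(a_1,v)}|\partial C_t|-\sum_{A(a_1,u)}|\partial C_t|=\sum_{A(u,v)}|\partial C_t|-|\partial C_i|$, the lone $-|\partial C_i|$ being the correction for the shared corner $u=a_i$. Subtracting the two definitions, the $-|\partial C_1|$ terms cancel and the remaining $|\partial C_i|$, $|\partial C_j|$ terms reassemble into exactly the definition of $\sigma_G(u,v)$.

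I expect the only genuinely delicate points to be the two structural facts behind the first part—that the outer boundary of a connected planar graph on $m$ vertices has length at most $2(m-1)$, and that each component's boundary is traversed exactly once in total by $\varphi$ so that the per-corner weights $|\partial C_t|$ behave additively—together with pinning down the convention behind $j>i$ so that the prefix relation is guaranteed. Once these are in place, both the inequality and the identity are routine accounting.
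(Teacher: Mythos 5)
Your proof is correct and takes essentially the same route as the paper's: the same estimates $|\varphi(a_1,u)|\le|\partial G^*|\le 2n-2$ and $2\sum_t|\partial C_t|<4n$ (discarding the subtracted nonnegative terms) for the bound, and the same prefix/telescoping decomposition of $\varphi(a_1,v)$ at $u$ for the identity, which the paper also carries out under the implicit convention that the indices give the order of the attachment corners along $\varphi$. Your additional care---justifying the boundary-walk bound $2(n-1)$ and pinning down the $j>i$ convention---merely makes explicit what the paper asserts without proof.
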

\begin{proof}
Recall that $G^*$ is a graph with $n$ vertices, so $\partial G^*$
is a weakly-simple polygon with at most $n$ distinct vertices, so
$|\partial G^*|\le 2n-2<2n$.  Furthermore, $\varphi(a_1,u)\subset
\partial G^*$, so $|\varphi(a_1,u)|\le |\partial G^*|< 2n$.  Similarly,
$2\sum_{a_i\in A(a_1,u)}|\partial C_i| \le 2\sum_{i=1}^r |\partial C_i| < 4n$.
Therefore, $\sigma_G(a_1,u)< 6n$, which proves the first part of the lemma.

To prove the second part of the lemma, first observe that $\sigma_G(u,v)=\sigma_G(v,u)$ by definition. To prove the second equality, denote
the relevant attachment corners of $G^*$ by
$a_1,a_2,\ldots,a_i=u,a_{i+1},\ldots,a_j=v$. Then $A(a_1,u)
= \{a_1,\ldots,a_i\}$, $A(a_1,v)=\{a_1,\ldots,a_j\}$, and
$A(u,v)=\{a_i,\ldots,a_j\}$, so
\begin{align*}
    \sigma_{G}(a_1,v)-\sigma_G(a_1,u) 
        & = |\varphi(a_1,v)|-|\varphi(a_1,u)| + 2\sum_{t=1}^j |\partial C_t|
         - 2\sum_{t=1}^i |\partial C_t|  \\
               & \quad {} - |\partial C_1|+|\partial C_1| 
                - |\partial C_j|+|\partial C_i| \\
        & = |\varphi(u,v)| + 2\sum_{t=i+1}^j |\partial C_t| - |\partial C_j| + |\partial C_i| \\
        & = |\varphi(u,v)| + 2\sum_{t=i}^j |\partial C_t| - |\partial C_j| - |\partial C_i| \\
        & = |\varphi(u,v)| + 2\sum_{a_t\in A(u,v)} |\partial C_t| - |\partial C_j| - |\partial C_i| \\
        & = \sigma_G(u,v)  \enspace . \qedhere
\end{align*}
\end{proof}

\subsection{Spanning paths for connected augmentations}\label{section:Spanning paths for connected augmentations}
Let $a_1, \ldots, a_r$ be an arbitrary order of the attachment corners of $G$ (we can get the incremental indexing by relabeling the components). 
Given a path $R = (\rho_1, \rho_2, \ldots, \rho_t)$ that passes through all attachment corners of $G$, we say that $C_i$ \emph{lies to the right of} $R$ if (1) $a_i$ is the only vertex of $C_i$ that belongs to $V(R)$, and (2) if $a_i = \rho_j$ for some $j\in\{1,\ldots,t\}$, then $\rho_{j-1}$ and $\rho_{j+1}$ appear as consecutive vertices when sorting---in the graph $G\cup R$---the neighbors of $a_i$ in clockwise order around $a_i$ (see Figure~\ref{figure:right-of}).

\begin{figure}
\centering
  \includegraphics{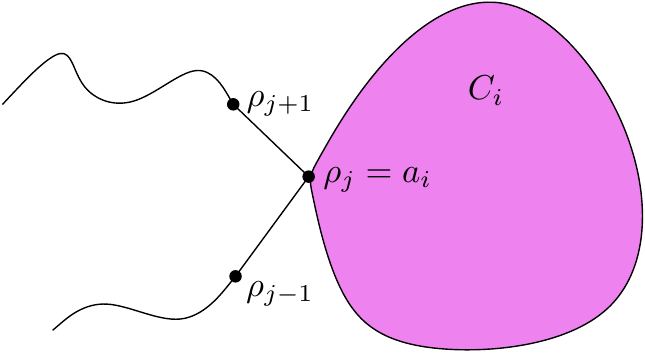}
  \caption{The component $C_i$ is to the right of the path $R=(\rho_1,\ldots,\rho_t)$.}
   \label{figure:right-of}
\end{figure}

We now show how to construct a path $R$ that connects the attachment corners of $G$ in the given order, i.e., if $i < j$, then $a_i$ is visited before $a_j$ by $R$. We want to construct $R$ so that each component $C_i$ of $G$ lies to the right of~$R$ and so that $G\cup R$ is a planar geometric graph.
Moreover, we want the subpath of $R$ between $a_j$ and $a_{j+1}$ to have $O(\sigma_G(a_j, a_{j+1}))$ vertices. 
We initialize $R$ with the trivial path that contains only $a_1$, and then extend $R$ iteratively, so that each new corner $a_i$ is included in $R$.
Recall that  for any given $\varepsilon >0$, $\partial_\varepsilon G^*$ denotes the $\varepsilon$-fattening of $G^*$ (see Section~\ref{section:Preliminaries}).
Let $\mu>0$ be a small constant to be specified later.
Initially, let $\varepsilon = 2\mu$ and let $\delta = \mu/2$. Let $\lambda < \mu/2^{r+1}$ be a constant sufficiently small so that $\partial_\lambda C_i \cap \partial_\lambda C_j = \emptyset$ for any distinct $i,j\in\{1,\ldots,r\}$.
Throughout, $\lambda$ remains constant while $\varepsilon$ and $\delta$ are redefined at each iteration. However, as an invariant we maintain $\lambda < \delta < \varepsilon$.

For each $i\in \{1,\dots,r\}$, let $w_i$ be the $\varepsilon$-copy of  $a_i$.
Split $\partial_\varepsilon G^*$ at $w_1$, i.e., $\partial_\varepsilon G^*$ is a path with both endpoints equal to $w_1$.
By choosing $\varepsilon$ sufficiently small, we guarantee that $\partial_\varepsilon G^*$ is simple, i.e., $\partial_\varepsilon G^*$ is isomorphic to $\varphi$.
We say that two points in the plane are \emph{$R$-visible} if the open segment joining them does not intersect $R$.
Let $\tau >0$. For each $i\in \{1,\dots,r\}$ such that $a_i$ is not an interior point of $R$, consider the set of points $N_i\subset \partial_\varepsilon G^*$ that are at distance at most $\tau$ from $w_i$.
Let $\Delta_i$ be the convex hull of $N_i\cup \{a_i\}$, i.e., $\Delta_i$ is a ``cone'' with apex at $a_i$; see Figure~\ref{fig:Neighborhood}. (We deliberately misuse the word ``cone'' here because the ``cones'' $\Delta_1,\ldots,\Delta_r$ in this section play the same roles as the cones $\Delta_1,\ldots,\Delta_n$ in Section~\ref{section:Trivial components}.)

While constructing $R$, we also maintain the \emph{escape invariant} which is defined as follows. Assume that $R$ so far connects $a_1,\ldots,a_i$, for some $i\in \{1,\dots,r-1\}$. Then: (1)~$R$ intersects neither $\partial_\varepsilon G^*$ nor its unbounded face; (2)~for each $j\in\{i+1,\ldots,r\}$, $R$ intersects neither the simple polygon bounded by $\partial_\delta C_j$ nor the cone $\Delta_j$; (3)~$\Delta_h\cap \Delta_j = \emptyset$, for any distinct $h,j\in\{1,\ldots,r\}$; and (4)~$w_i$ is $R$-visible from $a_i$.

In particular, Conditions~(2) and (4) of the escape invariant imply that, for
each $j\in\{i,\ldots,r\}$, every point in $N_j$ (including $w_j$) is
$R$-visible from $a_j$.  The escape invariant holds when $R=\{a_1\}$,
provided that $\tau$ is sufficiently small.

\begin{figure}[tb]
\centering
\includegraphics{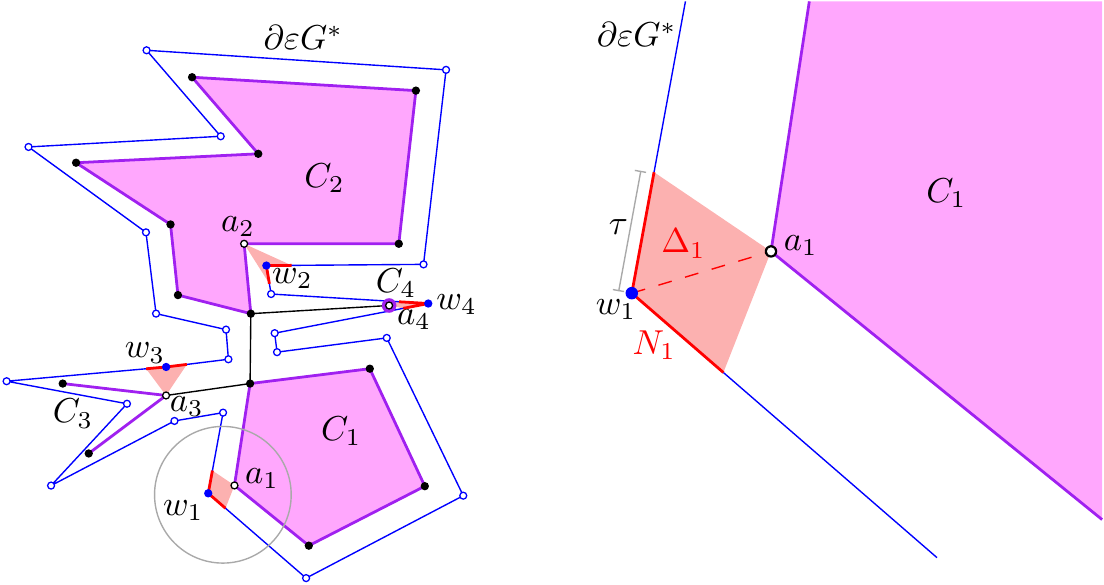}
\caption{The $\epsilon$-fattening of $G^*$ and the ``cones'' $\Delta_1,\ldots,\Delta_r$.}
\label{fig:Neighborhood}
\end{figure}

Assume that we have constructed a path $R$ that connects $a_1$ with $a_j$, for some $j\in\{1,\ldots,r-1\}$, and that the escape invariant holds.  To extend $R$, we create a new path that connects $a_j$ with $a_{j+1}$ without crossing $R$ while maintaining the escape invariant.  Recall that we consider $\partial_\varepsilon G^*$ to be a path with both endpoints on~$w_1$.

The first part of the path connecting $a_j$ with $a_{j+1}$ consists of a path connecting $a_j$ with $w_j$. If $j=1$, or if $j>1$ and $R$ together with the edge $a_j w_j$ leaves $C_j$ to its right, then connect $a_j$ with $w_j$ via a straight-line segment; since $w_j$ is $R$-visible from $a_j$, this segment does not cross $R$. Otherwise, connect $a_j$ with $\partial_\lambda C_j$ via a straight-line segment and traverse $\partial_\lambda C_j$ clockwise before moving to $w_j$ on $\partial_\varepsilon G^*$.  By the escape invariant, no crossing occur in this drawing.  In this way, we guarantee that $C_j$ lies to the right of the constructed path; see Figure~\ref{fig: Component to the right} for an illustration. Because $\lambda < \delta < \varepsilon$ and since $a_i\in V(R)$, the escape invariant is preserved.

\begin{figure}[tb]
\centering
\includegraphics[width=1\textwidth]{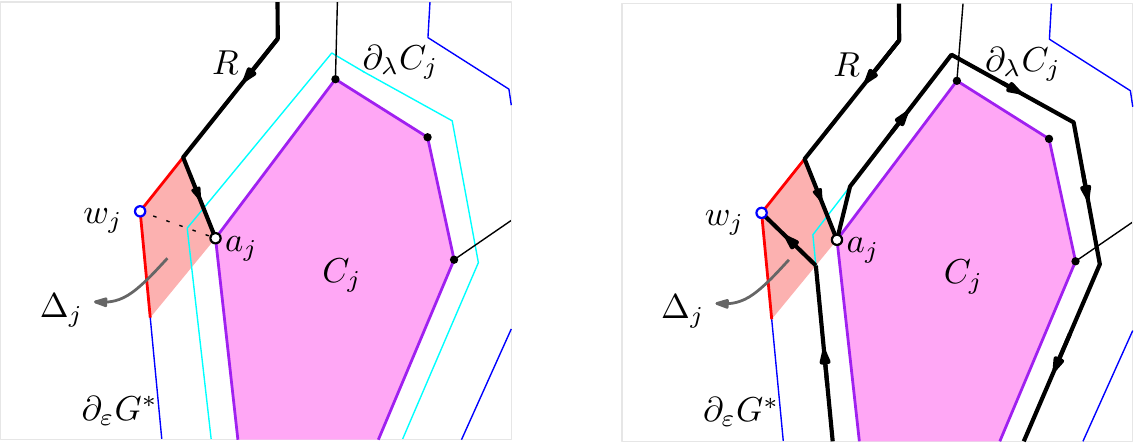}
\caption{When extending $R$ from $a_j$ to $a_{j+1}$ we have to take care to
    keep $C_j$ to the right of $R$.}
\label{fig: Component to the right}
\end{figure}

The path from $a_j$ to $a_{j+1}$ continues with a path from $w_j$ to $w_{j+1}$, which follows the unique path in $\partial_\varepsilon G^*$ from $w_j$ to $w_{j+1}$. However, whenever we reach an endpoint of $N_i$ for some $i\in \{j+2,\dots,r\}$, we take a \emph{detour} to the other endpoint of $N_i$ while avoiding its interior so that the points in the interior of $N_i$ remain $R$-visible from $a_i$; see Figure~\ref{fig:Skip Component}. Formally, we walk from the reached endpoint of $N_i$ to $\partial_\delta C_i\setminus \Delta_i$ along the boundary of $\Delta_i$. Then, we traverse the path $\partial_\delta C_i\setminus \Delta_i$ before moving to the other endpoint of $N_i$ from the endpoint of $\partial_\delta C_i \setminus \Delta_i$. Note that $R$ does not intersect the interior of the simple polygon bounded by $\partial_\delta C_i$ nor the interior of $\Delta_i$. Moreover, $R$ remains inside the simple polygon bounded by $\partial_\varepsilon G^*$.

\begin{figure}[tb]
\centering
\includegraphics[width=.98\textwidth]{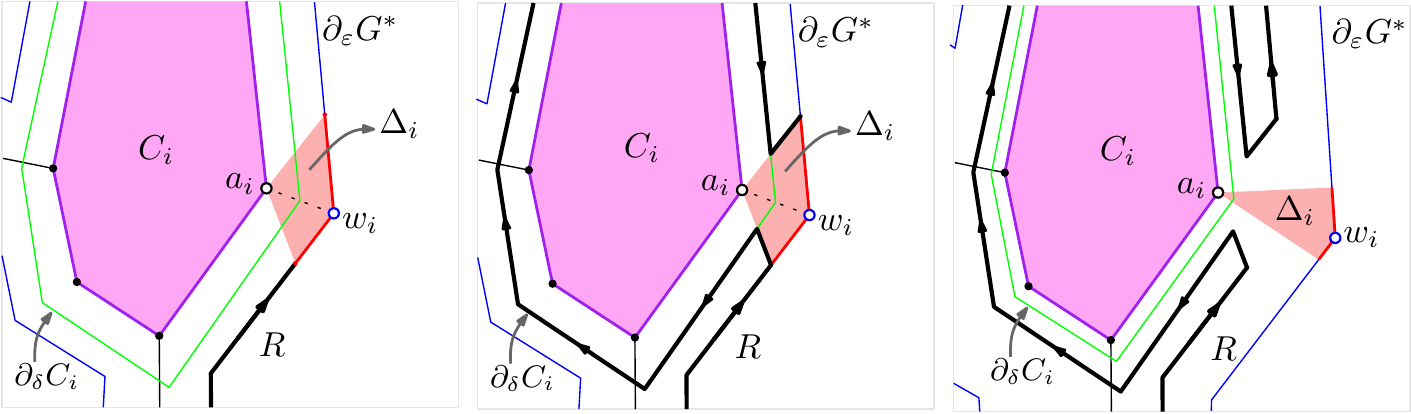}
\caption{The ``detour'' taken to avoid crossing the cone $\Delta_i$ (left, middle); and the narrowing of the cone $\Delta_i$ as well as the redefinition of the $\varepsilon$- and $\delta$-fattenings of $G^*$ and $C_i$, respectively.}
\label{fig:Skip Component}
\end{figure}

Once we go around $C_i$, we are back on $\partial_\varepsilon G^*$ on the other endpoint of $N_i$. In this way, we continue going towards~$w_{j+1}$ along $\partial_\varepsilon G^*$ until reaching an endpoint of $N_{j+1}$.
Once we reach an endpoint of $N_{j+1}$, we move directly from this endpoint to $a_{j+1}$.

Because $\partial_\varepsilon G^*$ is isomorphic to $\varphi$, the constructed path between $a_j$ and $a_{j+1}$ has length at most $|\varphi(a_j, a_{j+1})|$ plus the length of the boundaries of the components for which the path detoured. Because each component we walked around has its attachment corner on the path $\varphi(a_j, a_{j+1})$, and thus in $A(a_j, a_{j+1})$, the length of the constructed path between $a_j$ and $a_{j+1}$ is
\[ O\left(|\varphi(a_j, a_{j+1})| + \sum_{a_i\in A(a_j, a_{j+1})} |\partial C_i|\right) = O(\sigma_G(a_j, a_{j+1})) \enspace .
\]

After reaching $a_{j+1}$, we increase $\varepsilon$ by a factor of two. Similarly, we decrease the value of $\delta$ by a factor of two. That is, after reaching $a_{j+1}$, $\varepsilon = \mu 2^{j+1}$ while $\delta = \mu/2^{j+1}$ and hence, we guarantee that $\lambda < \delta < \varepsilon$.
Also, $\partial_\varepsilon G^*$ is still simple, provided that $\mu$ is initially chosen to be sufficiently small.
Finally, we reduce $\tau$ by a factor of two and update $N_i$ and $\Delta_i$ accordingly, for each $i\in \{1,\dots,n\}$; see Figure~\ref{fig:Skip Component} (c).

Recall that for each $a_i\notin R$, $R$ intersected neither the interior of $\Delta_i$ nor the interior of the polygon bounded by $\partial_\delta C_i$. Moreover, $R$ remained within $\partial_\varepsilon G^*$.
Therefore, after increasing (\emph{resp.} reducing) $\varepsilon$ (\emph{resp.} $\delta$), we preserve the escape invariant for the next iteration of the algorithm.
We iterate until all attachment corners of $G$ are visited by $R$.

\begin{lemma}\label{lemma:Path for connected augmentations}
Given an arbitrary order $a_1, \ldots, a_r$ of the attachment corners of $G$, there is a path $R$ connecting all attachment corners of $G$ in the given order such that $R\cup G$ is planar, every component $C_i$ of $G$ lies to the right of $R$ when oriented from $a_1$~to~$a_r$, and the subpath of $R$ between $a_j$ and $a_{j+1}$ has $O(\sigma_G(a_j, a_{j+1}))$ vertices, for each $j\in \{1,\dots,r-1\}$.
\end{lemma}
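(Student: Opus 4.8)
The plan is to verify, by induction on the iteration index $j$, that the iterative construction described above yields a path $R$ satisfying all three requirements, using the escape invariant as the induction hypothesis that makes each extension well-defined. Concretely, I would assume that after the $j$-th iteration $R$ connects $a_1,\ldots,a_j$, lies to the right of $C_1,\ldots,C_j$, and satisfies conditions (1)--(4) of the escape invariant, and then show that appending the subpath from $a_j$ to $a_{j+1}$ preserves all of these. The base case $R=\{a_1\}$ already establishes the invariant, provided $\tau$ is sufficiently small, so the entire content is in the inductive step.

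For planarity, the key observation is that the new subpath is built in three pieces, each certified non-crossing by the invariant. The segment (or $\partial_\lambda C_j$ detour) from $a_j$ to $w_j$ avoids $R$ because $w_j$ is $R$-visible from $a_j$ by condition~(4), and because $\partial_\lambda C_j$ lies strictly inside the polygon bounded by $\partial_\delta C_j$, which $R$ does not enter. The walk from $w_j$ to $w_{j+1}$ stays on $\partial_\varepsilon G^*$, whose interior contains all of $R$ by condition~(1), and each detour around a cone $\Delta_i$ with $i\geq j+2$ follows $\partial_\delta C_i\setminus\Delta_i$, which by condition~(2) is disjoint from $R$. Finally, the hop from the reached endpoint of $N_{j+1}$ to $a_{j+1}$ again uses the $R$-visibility of the points of $N_{j+1}$ from $a_{j+1}$ guaranteed by conditions~(2) and~(4). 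Since the previously constructed portion of $R$ is untouched, the extended $R$ remains planar, and $G\cup R$ is a planar geometric graph.

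For the right-of property, I would argue that the case analysis in the first piece of the subpath is exactly what forces $C_j$ to the right: when the straight segment $a_j w_j$ already leaves $C_j$ on its right we take it directly, and otherwise we route $a_j\to\partial_\lambda C_j\to w_j$ clockwise around the component, which reverses the local side. In both cases $a_j$ remains the only vertex of $C_j$ met by $R$, and the clockwise order of the two $R$-neighbors of $a_j$ around $a_j$ realizes condition~(2) of the definition of ``lies to the right of.'' For the length bound I would simply invoke the inline estimate already established: the subpath between $a_j$ and $a_{j+1}$ consists of $O(|\varphi(a_j,a_{j+1})| + \sum_{a_i\in A(a_j,a_{j+1})}|\partial C_i|)=O(\sigma_G(a_j,a_{j+1}))$ vertices, because every component walked around has its attachment corner on $\varphi(a_j,a_{j+1})$ and hence in $A(a_j,a_{j+1})$; summing over $j$ then gives the per-segment guarantee.

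The main obstacle I anticipate is re-establishing the escape invariant after the simultaneous rescalings at the end of the iteration---doubling $\varepsilon$, halving $\delta$ and $\tau$, and shrinking each cone $\Delta_i$---while guaranteeing that $\partial_\varepsilon G^*$ stays simple across all $r-1$ iterations. Since $\varepsilon$ grows to $\mu 2^r$, this forces $\mu$ to be fixed small enough in advance, and one must check that the nesting ($\partial_\delta C_i$ and $\Delta_i$ shrinking into their predecessors, while $w_{j+1}$ stays pinned as the apex of $\Delta_{j+1}$) keeps conditions~(1)--(3) intact and that the untouched components $C_i$ with $i\geq j+2$ still satisfy $R$-visibility of every point of $N_i$ from $a_i$. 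The care needed here is precisely to ensure these geometric perturbations never let $R$ cross into a region it must avoid, which is what keeps the construction valid through all iterations.
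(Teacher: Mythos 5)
Your proposal is correct and takes essentially the same route as the paper: the paper's proof likewise reads the lemma off the iterative construction, taking the right-of property and the $O(\sigma_G(a_j,a_{j+1}))$ length bound directly from the in-text estimates, and deriving planarity of each extension $\gamma_j$ from the escape invariant together with the monotone rescaling (doubling $\varepsilon$, halving $\delta$ and $\tau$, narrowing the cones $\Delta_i$, and using $\lambda<\delta$ for the $\partial_\lambda C_j$ piece). The obstacle you flag---fixing $\mu$ small enough in advance so that $\partial_\varepsilon G^*$ stays simple even when $\varepsilon$ grows to $\mu 2^r$---is exactly the point the paper disposes of by choosing $\mu$ sufficiently small initially.
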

\begin{proof}
By construction, the attachment corners are visited by $R$ in the given order; also, the subpath, $\gamma_j$, of $R$ between $a_{j}$ and $a_{j+1}$ has $O(\sigma_G(a_j,a_{j+1}))$ vertices. For each component $C_i$, $a_i$ is the only vertex of $C_i$ visited by $R$. Moreover, the construction guarantees that $C_i$ lies to the right of $R$ when oriented from $a_1$ to $a_r$.

To prove that $R$ is planar, recall that in each round we extend $R$ by constructing  a path $\gamma_j$ that connects $a_j$ with $a_{j+1}$. We claim that at this point, no edge of $\gamma_j$ crosses the portion of $R$ constructed so far.
Indeed, because the value of $\varepsilon$ (\emph{resp.} $\delta$) increases (\emph{resp.} decreases) in each round, the edges of $\gamma_j$ that lie on the boundaries of some $\partial_\delta C_i$ or on $\partial_\varepsilon G^*$ cannot cross $R$ by the escape invariant.
Moreover, this invariant states that for each $a_i\notin R$, $R$ does not intersect~$\Delta_i$.
Because each cone $\Delta_i$ is narrowed in each round, the edges of $\gamma_j$ that lie on the boundary of this cone cannot cross $R$. Finally, because $\lambda < \delta$, the edges of $\gamma_j$ that lie on $\partial_\lambda C_j$ do not cross $R$. Therefore, we conclude that by concatenating $\gamma_j$ and $R$, we obtain a planar path.
\end{proof}

Figure~\ref{figure:big-example} illustrates the algorithm of Lemma~\ref{lemma:Path for connected augmentations} on a small example.  In this example, the path from $a_1$ to $a_2$ passes by $a_4$, so $R$ detours around $C_4$ in order to preserve the escape invariant at $a_4$.  After $R$ attaches to $a_2$ and $a_3$, it winds around components $C_2$ and $C_3$, respectively, in order to ensure that these components attach to the right of $R$.

\begin{figure}
   \centering{\includegraphics{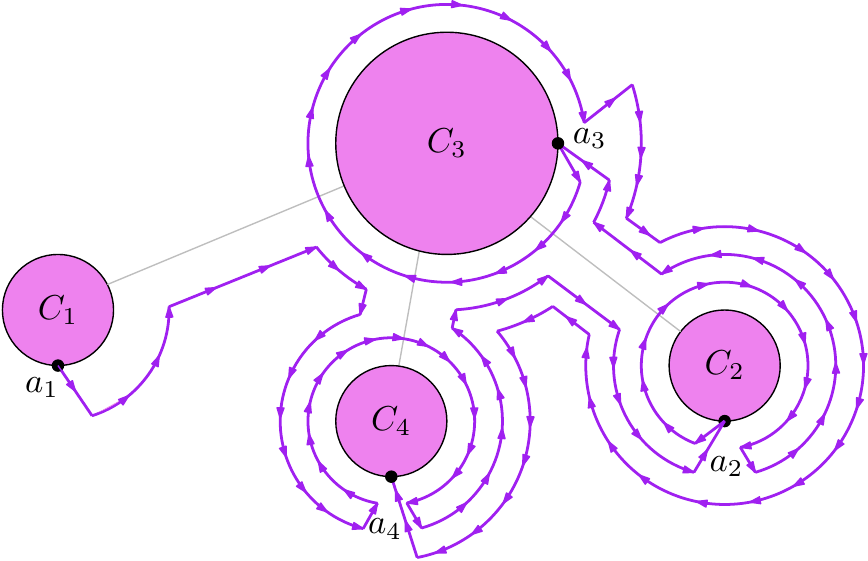}}
   \caption{An example of the algorithm for generating a spanning path that connects $a_1,\ldots,a_4$.}
   \label{figure:big-example}
\end{figure}

\subsection{Compatible drawings of planar graphs}
Let $\mathcal G$ be a planar graph with $n$ vertices and $r$ connected
components.  Let $G_1, \ldots, G_k$ be $k$ planar isomorphic drawings
of $\mathcal G$.  For now, we will assume that, in these drawings,
every component of $\mathcal G$ has at least one vertex incident to
the outer face.  
We show how to construct a compatible augmentation of $\mathcal G$
of size $O(nr^{1-1/k})$.

Let $\mathcal C_1, \ldots, \mathcal C_r$ be the connected components of $\mathcal G$.  Because $G_1,\ldots,G_k$ are isomorphic, we can select one attachment corner from each component in the drawing $G_1$, and this attachment corner also appears in each of $G_2,\ldots,G_k$. Thus, for each $j\in\{1,\ldots,r\}$, we choose an attachment corner $a_j$ of $\partial C_j$ such that $a_j$ is incident to the outer face of $C_j$.

For each $i\in \{1,\dots,k\}$, let $G_i^*$ be a connected augmentation of $G_i$, as defined in Section~\ref{section: connected augmentations}. For each $i\in\{1,\ldots,k\}$ and $j\in\{1,\ldots,r\}$, let $\rank_i(j) = \sigma_{G_i^*}(a_1, a_j)$. For each $j\in\{1,\ldots,r\}$, let $x_j\in \mathbb{R}^k$ be a point corresponding to the component $C_j$ such that $x_j = (\rank_1(a_j), \rank_2(a_j), \ldots, \rank_k(a_j))$. Let $X = \{x_1, \ldots, x_r\}\subset\R^k$ denote the resulting set of points. Lemma~\ref{lemma:Contained in integer grid} implies that $X$ is contained in an integer grid of side length $4n$.

Let $P$ be the shortest Hamiltonian path of $X$ under the $\ell_\infty$ norm. As before, because $X$ is contained in the $k$-dimensional integer grid of side-length $4n$ and $|X| = r$, the maximum ($\ell_\infty$) length of $P$ is $O(nr^{1-1/k})$. Note that the order of the points in $P$ induces an order of the components of $\mathcal G$ and hence an order of the attachment corners of each $G_i$.

\begin{theorem}\label{theorem:main}
For each $1\leq i\leq k$, we can construct a path $R_i$ of length $O(nr^{1-1/k})$ that connects every component of $G_i$ such that $G_i\cup R_i$ is planar. Moreover, for each $1\leq i<j\leq k$, $G_i\cup R_i$ is isomorphic to $G_j\cup R_j$.
\end{theorem}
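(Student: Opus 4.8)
The plan is to follow the same template as the proof of Theorem~\ref{theorem:points}, replacing the rank-based routing for isolated vertices by the cost function $\sigma$ together with the routing subroutine of Lemma~\ref{lemma:Path for connected augmentations}. By relabelling, let $\mathcal C_1,\ldots,\mathcal C_r$ (with attachment corners $a_1,\ldots,a_r$) be ordered as their representative points $x_1,\ldots,x_r$ are visited along the shortest $\ell_\infty$ Hamiltonian path $P$ of $X$. I would then define the abstract augmentation $\mathcal H$ of $\mathcal G$ to be the graph obtained by adding, for each $j\in\{1,\ldots,r-1\}$, a path joining $a_j$ to $a_{j+1}$ through $\Theta(d_j)$ new degree-$2$ subdivision vertices, where $d_j=\|x_j-x_{j+1}\|_\infty$. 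Since $\sum_j d_j$ equals the $\ell_\infty$-length of $P$, the graph $\mathcal H$ is connected and has $O(nr^{1-1/k})$ vertices.

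Next, for each $i$ I would apply Lemma~\ref{lemma:Path for connected augmentations} to the drawing $G_i$ with the order $a_1,\ldots,a_r$, obtaining a planar path $R_i$ such that $G_i\cup R_i$ is planar, every component of $G_i$ lies to the right of $R_i$ when oriented from $a_1$ to $a_r$, and the subpath of $R_i$ between $a_j$ and $a_{j+1}$ uses $O(\sigma_{G_i}(a_j,a_{j+1}))$ vertices. The numerical point, exactly parallel to Theorem~\ref{theorem:points}, is that this geometric budget never exceeds the number of subdivision vertices allotted to segment $j$ of $\mathcal H$: by Lemma~\ref{lemma:Contained in integer grid} we have $|\rank_i(a_j)-\rank_i(a_{j+1})|=\sigma_{G_i}(a_j,a_{j+1})$ for any two attachment corners, so $d_j=\max_{1\le i\le k}\sigma_{G_i}(a_j,a_{j+1})\ge \sigma_{G_i}(a_j,a_{j+1})$ for every $i$. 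Hence the $\Theta(d_j)$ subdivision vertices on segment $j$ of $\mathcal H$ suffice to realize the routing of $R_i$ in every drawing; any surplus subdivision vertices are inserted as collinear degree-$2$ points along the already-drawn edges, so that each $R_i$ is a drawing of the single fixed graph $\mathcal H$ using exactly its vertices. Summing over $j$ gives $|R_i|=O\!\left(\sum_j d_j\right)=O(nr^{1-1/k})$.

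The crux is showing that $G_1\cup R_1,\ldots,G_k\cup R_k$ are pairwise isomorphic, which I would verify through the rotation-system / outer-face / containment characterization of the footnote. As abstract graphs all $G_i\cup R_i$ coincide with $\mathcal H$, so only the combinatorial drawing data must be matched. At every vertex other than an attachment corner the rotation is either inherited from $G_i$ (and these agree because $G_1,\ldots,G_k$ are isomorphic) or trivial (degree-$2$ subdivision vertices). At each attachment corner $a_j$ the only new data is where the two incident edges of $R_i$ are inserted into the cyclic order around $a_j$; the property that $C_j$ lies to the right of $R_i$, guaranteed by Lemma~\ref{lemma:Path for connected augmentations} in every drawing, forces this insertion into the same corner and with the same orientation for all $i$. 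Since $R_i$ is routed inside the outer face of $G_i$ throughout, the outer face and the cycle-vertex containment relations are also preserved. Thus all $G_i\cup R_i$ share the same rotation system, outer face, and containment relation, and are pairwise isomorphic.

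The main obstacle I anticipate is precisely this last step, and in particular reconciling the fact that the geometric paths $R_i$ genuinely differ between drawings (they wind and detour a different number of times, since $\sigma_{G_i}$ varies with $i$) with the requirement that, as drawings of the common graph $\mathcal H$, they be combinatorially identical. The resolution is that all of this winding is absorbed into the routing of the subdivision path between consecutive attachment corners and never alters the incidence structure at the original vertices; the uniform ``to the right'' convention supplied by Lemma~\ref{lemma:Path for connected augmentations} is exactly what makes the rotation at each attachment corner independent of $i$. A secondary point to check is that the surplus degree-$2$ subdivision vertices can always be inserted without creating crossings, which is immediate as they lie on edges already drawn in a planar fashion.
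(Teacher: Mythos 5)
Your proposal is correct and matches the paper's own proof essentially step for step: the same relabelling by the $\ell_\infty$-shortest Hamiltonian path of $X$, the same construction of $\mathcal H$ with $\Theta(d_j)$ subdivision vertices per segment, the same use of Lemma~\ref{lemma:Contained in integer grid} to show $d_j \ge \sigma_{G_i}(a_j,a_{j+1})$, and the same appeal to the ``components to the right'' guarantee of Lemma~\ref{lemma:Path for connected augmentations} for isomorphism. Your elaboration of the isomorphism step via rotation systems, and your explicit handling of surplus degree-$2$ vertices and of the absolute value in the difference formula, only make explicit details the paper leaves implicit.
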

\begin{proof}
By relabelling, let $(a_1, \ldots, a_r)$ denote the order of the attachment corners of $G_i$ induced by $P$.  Letting $d_j$ denote the $\ell_\infty$ distance between $x_j$ and $x_{j+1}$, we denote by $\mathcal{H}$ a path that passes through (the vertices corresponding to corners) $a_1, \ldots, a_r$ in this order, and that includes an additional $O(d_j)$ vertices between $a_{j}$ and $a_{j+1}$.  Thus, the number of vertices in $\mathcal{H}$ is proportional to the length of $P$, which is $O(nr^{1-1/k})$.

For each $G_i$, we use Lemma~\ref{lemma:Path for connected augmentations}
to draw $\mathcal{H}$ as a planar path, $R_i$, that connects $a_1,
\ldots, a_r$ in this order. By construction, 
\begin{align*}
  d_j & = \max\{|\rank_i(j+1) - \rank_i(j)|: i\in\{1,\ldots,k\}\} \\
    & \ge |\rank_i(j+1) - \rank_i(j)| \enspace ,
\end{align*}
so the $O(d_j)$ vertices in $\mathcal{H}$
between $a_j$ and $a_{j+1}$ are enough to draw the 
$O(\sigma_{G_i}(a_j,a_{j+1}))$ vertices in $R_i$ between $a_j$ and
$a_{j+1}$, since $\sigma_{G_i}(a_j,a_{j+1}) = \sigma_{G_i}(a_1,a_{j+1})-\sigma_{G_i}(a_1,a_{j}) = |\rank_i(j+1) - \rank_i(j)|$.



To conclude, each $R_i$ visits each component only at its attachment corner, the attachment corners of each $G_i$ are connected in the same order, and $R_i$ leaves every component to the right when oriented from $a_1$ to $a_r$. Therefore, $G_i\cup R_i$ is isomorphic to $G_j\cup R_j$ for each $1\leq i<j\leq k$.
\end{proof}

\subsection{Handling interior components}

In the preceding section, we assumed that the drawings $G_1,\ldots,G_k$ of $G$ were such that every component was incident to the outer face.
To see that the assumption is not necessary, observe that we can first use the preceding algorithm to connect all the components that do appear on the outer face using a polygonal path that is contained on the outer face.  The number of vertices used in this path is $O(n'r^{1-1/k})$, where $n'$ is the number of vertices on the outer face.

Next, for each interior face, $f$, that has multiple components
$C_1,\ldots,C_t$ on its boundary, we can use (a small modification of)
the preceding algorithm to connect $C_1,\ldots,C_t$ and the outer
boundary of $f$ using a path that is contained in $f$.  This path
has length $O(n_fr^{1-1/k})$.  We then repeat this step on each face.
The result is a connected augmentation of $\mathcal G$ whose total size
is $O(Nr^{1-1/k})$ where $N=O(n)$ is the total size of all faces.

\subsection{An algorithm}

We remark that Theorem~\ref{theorem:main} yields an efficient algorithm for constructing the augmentation $\mathcal{H}$, and even the drawings $H_1,\ldots,H_k$.  The main steps involved are: 

\begin{enumerate}
  \item Finding connected planar supergraphs $G_1^*,\ldots,G_k^*$
  of the drawings $G_1,\ldots,G_k$.  For each planar graph $G_i$, this
  can easily be done in $O(n\log n)$ time using, for example, a plane
  sweep algorithm that maintains the invariant that all components with
  a vertex to the left of the sweep-line are already joined by edges.
  Thus, this step takes $O(kn\log n)$ time.

  \item Constructing the point set $X$ and finding the path
  $P$.  Constructing $X$ takes $O(kn)$ time, while a path $P$ of length
  $O(n^{2-1/k})$ can be obtained from an (approximate) minimum spanning
  tree of $X$.  For constant values of $k$, an approximate MST can
  be computed in $O(n\log n)$ time using the algorithm of Calahan and
  Kosaraju \cite{callahan.kosaraju:faster}.  For larger values of $k$,
  the actual minimum spanning tree can be computed in $O(kn^2)$ time.

  \item Constructing each of the paths $R_1,\ldots,R_k$.
  Each of these paths is easily constructed in $O(n^{2-1/k})$ time
  once we have determined values of $\epsilon$, $\delta$, $\tau$, and
  $\lambda$ that are sufficiently small.  A more careful examination
  of our algorithm reveals that all that is really needed is
  a value of $\varepsilon$ such that $\partial_\epsilon G_i^*$ is
  simple, for each $i\in\{1,\ldots,k\}$.  Once we have this value of
  $\varepsilon$, the values of the remaining variables can taken from
  the set $\{i\varepsilon/3r:i\in\{1,\ldots,3r\}\}$.

  It turns out that a value $\varepsilon \le cm/n$, where $m$ is the
  minimum non-zero difference between $x$ coordinates or $y$ coordinates
  in $G_1,\ldots,G_k$, and $c$ is a constant, is sufficiently small.
  Thus, a suitable $\varepsilon$ can be computed in $O(kn\log n)$ time
  by sorting.
\end{enumerate}

This yields the following algorithmic result about connected augmentations:

\begin{theorem}
  An augmentation satisfying the conditions of Theorem~\ref{theorem:main}
  can be computed in $O(kn^2)$ time for any value of $k$.  If $k$ is
  constant, then the augmentation can be computed in $O(nr^{1-1/k})$
  time.
\end{theorem}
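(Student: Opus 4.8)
The plan is to establish both running-time bounds by analysing, stage by stage, the three-step algorithm already sketched above and then summing the costs. The three stages are: (i)~building connected planar supergraphs $G_1^*,\ldots,G_k^*$; (ii)~forming the point set $X\subset\R^k$ and extracting a short spanning path $P$; and (iii)~drawing each $R_i$ using Lemma~\ref{lemma:Path for connected augmentations}. Since every output path $R_i$ has $O(nr^{1-1/k})$ vertices and $nr^{1-1/k}\le n^2$, the target bounds amount to showing that no stage exceeds $O(kn^2)$ in general, and that each stage costs $O(nr^{1-1/k})$ (up to lower-order preprocessing) when $k=O(1)$.

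For stage~(i) I would run, for each of the $k$ drawings, a left-to-right plane sweep that keeps the components touched so far joined by visibility edges; maintaining this invariant yields each $G_i^*$ in $O(n\log n)$ time, hence $O(kn\log n)$ overall. For stage~(ii), constructing $X$ from the quantities $\sigma_{G_i^*}(a_1,a_j)$ of Lemma~\ref{lemma:Contained in integer grid} costs $O(kn)$ (a single prefix-sum traversal of each $\partial G_i^*$); to obtain $P$ I would first compute a spanning tree of $X$ and then shortcut a traversal of it into a spanning path, whose length is within a constant factor of the optimum and hence $O(nr^{1-1/k})$ by Corollary~\ref{cor:tsp}. When $k$ is constant, a constant-factor approximate minimum spanning tree of the $r$ points suffices and is computable in $O(r\log r)=O(n\log n)$ time via the algorithm of Callahan and Kosaraju~\cite{callahan.kosaraju:faster}; for general $k$ I would instead compute the exact tree from the complete graph on $X$, whose $\binom{r}{2}$ pairwise $\ell_\infty$ distances take $O(kr^2)=O(kn^2)$ time to evaluate, after which a dense-graph MST is $O(r^2)$.

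Stage~(iii) is where the real work lies. Given suitable parameters, each $R_i$ is produced by walking along $\partial_\varepsilon G_i^*$ and performing the detours of Lemma~\ref{lemma:Path for connected augmentations}, at a cost proportional to the number of edges produced, i.e.\ $O(nr^{1-1/k})$ per drawing and $O(k\,nr^{1-1/k})\le O(kn^2)$ in total. The obstacle is that Lemma~\ref{lemma:Path for connected augmentations} only asks for $\varepsilon,\delta,\tau,\lambda$ to be ``sufficiently small'', whereas an algorithm must commit to concrete, efficiently computable values without exact real arithmetic. The key observation to prove is that a single threshold $\varepsilon\le cm/n$---where $m$ is the smallest nonzero difference between two $x$- or two $y$-coordinates occurring in $G_1,\ldots,G_k$ and $c$ is an absolute constant---already guarantees that $\partial_\varepsilon G_i^*$ is simple for every $i$; this $m$, and hence $\varepsilon$, is found by sorting coordinates in $O(kn\log n)$ time. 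I would then argue that all remaining per-round values can be drawn from the arithmetic progression $\{i\varepsilon/3r:i\in\{1,\ldots,3r\}\}$ while respecting the nesting $\lambda<\delta<\varepsilon$ and keeping every fattening simple, so that the conceptual doubling schedule used in the existence proof can be replaced by this explicit, polynomially-sized menu of parameters.

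Summing the three stages gives $O(kn^2)$ for arbitrary $k$, since the exact-MST cost of stage~(ii) and the $k$ path constructions of stage~(iii) each fit within this budget. For constant $k$, every preprocessing cost collapses to $O(n\log n)$ and the dominant term is the $O(nr^{1-1/k})$ spent drawing the $O(1)$ paths, yielding the claimed $O(nr^{1-1/k})$ bound (the $O(n\log n)$ sorting overhead being absorbed). The main difficulty to nail down carefully is precisely the geometric parameter-selection argument of stage~(iii): establishing that the ``sufficiently small'' existence guarantees of Lemma~\ref{lemma:Path for connected augmentations} can be met by numbers of polynomial bit-complexity computed by sorting alone.
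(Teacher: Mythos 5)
Your proposal matches the paper's argument essentially step for step: the same three-stage decomposition (plane-sweep construction of $G_1^*,\ldots,G_k^*$ in $O(kn\log n)$ time; building $X$ in $O(kn)$ time and extracting the spanning path from a Callahan--Kosaraju approximate minimum spanning tree for constant $k$ or an exact dense-graph MST in $O(kn^2)$ time otherwise; and output-sensitive construction of each $R_i$), including the paper's key parameter-selection observation that a single $\varepsilon\le cm/n$, with $m$ the minimum nonzero coordinate difference found by sorting, makes every $\partial_\varepsilon G_i^*$ simple, after which the remaining parameters can be taken from $\{i\varepsilon/3r : i\in\{1,\ldots,3r\}\}$. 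The argument is correct (your $O(nr^{1-1/k})$ per-path cost in stage~(iii) is in fact slightly sharper than the paper's stated $O(n^{2-1/k})$), so there is nothing to add.
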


The latter result is worst-case optimal since, in the next section we will show that there exists inputs where every augmentation has size $\Omega(nr^{1-1/k})$.

\section{Lower Bounds}\label{section:Lower bound}

Our lower bounds are based on the following lemma. It says that we can find $k$ permutations of $\{1,\ldots,r\}$ such that for half the indices $i\in\{1,\ldots,r\}$, and every $j\in\{1,\ldots,r\}\setminus\{i\}$, there is a permutation in which $i$ and $j$ are at distance $\Omega(r^{1-1/k})$.

\begin{lemma}\label{lem:permutations}
Let $t=(1/2)^{1+1/k}\cdot(r-1)^{1-1/k}$.  There exists permutations $\pi^{(1)},\ldots,\pi^{(k)}$ of $\{1,\ldots,r\}$ such that for at least half the values of $i\in\{1,\ldots,r\}$ and for every $j\in\{1,\ldots,r\}\setminus\{i\}$,
\begin{equation}
 \max\left\{\left|\pi^{(s)}_i-\pi^{(s)}_j\right|\colon s\in\{1,\ldots,k\} \right\}
 \ge t \enspace .
     \label{eq:perm}
\end{equation}
\end{lemma}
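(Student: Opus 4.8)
The plan is to use the probabilistic method. I would choose the permutations $\pi^{(1)},\ldots,\pi^{(k)}$ independently and uniformly at random from among all permutations of $\{1,\ldots,r\}$, and show that the expected number of indices $i$ that \emph{fail} the required property is strictly less than $r/2$. Since the expectation is below $r/2$, at least one realization of the random permutations leaves fewer than $r/2$ failing indices, and hence at least half of the indices satisfy \eqref{eq:perm} for every $j\neq i$; those permutations are the ones the lemma asks for. Call an index $i$ \emph{bad} if there is some $j\neq i$ with $\max_s\bigl|\pi^{(s)}_i-\pi^{(s)}_j\bigr|<t$, i.e. if $i$ is $\ell_\infty$-close to another index in \emph{every} coordinate; the goal is to bound the probability that $i$ is bad.

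The key estimate is a bound on the probability that a fixed pair $i\neq j$ is close in a single coordinate. For one uniformly random permutation $\pi$, the pair $(\pi_i,\pi_j)$ is uniformly distributed over the $r(r-1)$ ordered pairs of distinct values in $\{1,\ldots,r\}$. The number of such ordered pairs $(a,b)$ with $|a-b|<t$ equals $\sum_{d\ge 1,\,d<t}2(r-d)<2tr$, so
\[
  \Pr\!\left[\,|\pi_i-\pi_j|<t\,\right]\;<\;\frac{2tr}{r(r-1)}\;=\;\frac{2t}{r-1}\enspace .
\]
Because the $k$ permutations are chosen independently, the probability that $i$ and $j$ are close in all $k$ coordinates is the $k$-th power of this quantity, hence strictly less than $\bigl(2t/(r-1)\bigr)^{k}$.

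Next I would substitute the calibrated value $t=(1/2)^{1+1/k}(r-1)^{1-1/k}$. A short computation gives $\tfrac{2t}{r-1}=\bigl(2(r-1)\bigr)^{-1/k}$, so that $\bigl(2t/(r-1)\bigr)^{k}=\tfrac{1}{2(r-1)}$ — which is exactly what the peculiar constant in the definition of $t$ is engineered to produce. A union bound over the $r-1$ candidate partners $j\neq i$ then yields
\[
  \Pr[\,i\text{ is bad}\,]\;<\;(r-1)\cdot\frac{1}{2(r-1)}\;=\;\frac12\enspace .
\]
Summing this over all $i$, the expected number of bad indices is strictly less than $r/2$, and averaging (some realization is at most the mean, and the count is an integer) produces permutations with more than $r/2$ good indices, completing the argument outlined above.

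The argument is short, and I expect the only real obstacles to be bookkeeping rather than conceptual. First, the exact calibration of $t$ must be verified carefully: the entire proof hinges on $\bigl(2t/(r-1)\bigr)^{k}$ collapsing to precisely $1/(2(r-1))$, so getting the constant $(1/2)^{1+1/k}$ right is the crux, and I would double-check the identity $2\cdot(1/2)^{1+1/k}=2^{-1/k}$. Second, I must be slightly careful with integrality in the counting step: since each $\pi^{(s)}_i$ is an integer, the event ``$<t$'' should be read as ``$\le\lceil t\rceil-1$'', but because $\lceil t\rceil-1<t$ the bound $\sum_{d<t}2(r-d)<2tr$ survives with the strict inequality intact, so nothing is lost. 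I would also note that the statement this yields is purely existential; this is harmless, since the lemma only feeds a lower-bound construction, for which the existence of suitable permutations is all that is required.
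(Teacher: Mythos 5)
Your proposal is correct and follows essentially the same route as the paper's proof: independent uniformly random permutations, the single-coordinate bound $2t/(r-1)$, independence across the $k$ coordinates collapsing to exactly $\frac{1}{2(r-1)}$, and a first-moment argument over $j$ and then over $i$. The only cosmetic differences are that you bound the pairwise closeness probability by directly counting ordered pairs $(a,b)$ with $|a-b|<t$ (the paper conditions on $\pi^{(s)}_i$ and notes $\pi^{(s)}_j$ is uniform over the remaining $r-1$ values), and you phrase the per-index step as a union bound where the paper invokes Markov's inequality on the expected number of offending $j$ --- these are the same estimate.
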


\begin{proof}
  This proof is an application of the probabilistic method.  Select each
  of $\pi^{(1)},\ldots,\pi^{(k)}$ independently and uniformly from among
  all $r!$ permutations of $\{1,\ldots,r\}$.  Fix a particular index $i$
  and a particular index $j$.  For a particular $s\in\{1,\ldots,k\}$, the
  probability that $|\pi^{(s)}_i-\pi^{(s)}_j|\le t$ is at most $2t/(r-1)$
  since the set
  $\{\hat \jmath\in\{1,\ldots,r\} \colon |\pi^{(s)}_i-\pi^{(s)}_{\hat
   \jmath}|\le t\}$ is a random subset of at most $2t$ elements drawn without
  replacement from $\{1,\ldots,r\}\setminus \{i\}$.

  Therefore, since $\pi^{(1)},\ldots,\pi^{(k)}$ are chosen independently, 
  \[
    \Pr\left\{\max\left\{\left|\pi^{(s)}_i-\pi^{(s)}_j\right|\colon s\in\{1,\ldots,k\} \right\}\le t\right\} \le (2t/(r-1))^k = \frac{1}{2(r-1)} \enspace .
  \]
  In particular, the expected number of such
  $j\in\{1,\ldots,r\}\setminus\{i\}$ is at most $1/2$ so, by Markov's
  Inequality, the probability that there exists at least one such $j$
  is at most $1/2$.  Thus, with probability at least $1/2$, the index $i$
  satisfies \eqref{eq:perm} and therefore the expected number of indices $i\in\{1,\ldots,r\}$
  that satisfy \eqref{eq:perm} is $r/2$.  We conclude that there must exist
  some permutations $\pi^{(1)},\ldots,\pi^{(k)}$ that satisfy \eqref{eq:perm}
  for at least half the indices $i\in\{1,\ldots,r\}$.
\end{proof}

Using Lemma~\ref{lem:permutations}, we can prove a lower bound that matches the upper bound obtained in our general construction.

\begin{theorem}\label{thm:lower-bound}
  For every positive integer $n$ and every $r\in\{2,\ldots,\lfloor n/4\rfloor\}$,
  there exists a graph $\mathcal G$ having $n$ vertices, $r$ connected
  components, and $k$ isomorphic drawings $G_1,\ldots,G_k$ such that
  any compatible augmentation of $\mathcal G$ has size $\Omega(nr^{1-1/k})$.
\end{theorem}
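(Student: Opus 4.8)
The plan is to turn the combinatorial separation furnished by Lemma~\ref{lem:permutations} into a geometric ``maze'' in which connecting two far-apart components is provably expensive. Set $m=\Theta(n/r)\ge 4$ (this is where the hypothesis $r\le\lfloor n/4\rfloor$ is used) and make every component a \emph{serpentine path}, so that $\mathcal G$ is a forest: every vertex has degree at most $2$, hence all rotation schemes are trivial, there is no cycle--vertex containment to respect, and a forest has a single face. Consequently \emph{any} two planar drawings of $\mathcal G$ are isomorphic, so I am free to arrange the components differently in each $G_s$ while automatically satisfying the isomorphism requirement. Two of the components are tall vertical ``walls''; the remaining $r'=r-2=\Theta(r)$ components are horizontal serpentine ``combs'' on $\Theta(m)$ vertices each. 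I apply Lemma~\ref{lem:permutations} to these $r'$ combs to obtain permutations $\pi^{(1)},\ldots,\pi^{(k)}$ and a set $I$ of at least $r'/2$ \emph{good} indices satisfying \eqref{eq:perm} for $t=\Theta((r')^{1-1/k})=\Theta(r^{1-1/k})$. In drawing $G_s$ I place comb $i$ at vertical level $\pi^{(s)}_i$ between the two walls, offsetting and interlocking consecutive combs so that the region between the walls becomes a single corridor whose passages between adjacent levels alternate sides.

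The geometric heart of the argument is the claim that, in $G_s$, any planar curve avoiding $G_s$ that joins comb $i$ to comb $j$ must use $\Omega(|\pi^{(s)}_i-\pi^{(s)}_j|\cdot m)$ edges. The walls forbid any route around the outside, confining the curve to the corridor, so it must thread the passages of all $\Omega(|\pi^{(s)}_i-\pi^{(s)}_j|)$ intervening combs; since consecutive passages lie on opposite sides, each level forces a full traversal of the corridor, and the interlocking teeth force $\Omega(m)$ \emph{bends} per traversal. This last point is essential: it prevents a few long straight segments from shortcutting across the maze, which is exactly what would otherwise collapse the bound from $\Omega(\ell m)$ down to $\Omega(\ell)$ edges. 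I expect this to be the main obstacle of the whole proof: one must choose the tooth geometry and the interlocking so that no curve can cut corners, and verify this robustly against an \emph{arbitrary} augmentation — a delicate planar separation argument.

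With this gadget the counting is routine. Let $\mathcal H$ be any compatible augmentation, and note that the number of added edges is a property of the abstract graph $\mathcal H$, hence identical in all drawings $H_1,\ldots,H_k$. Contract each component to a node to obtain a connectivity tree $T$ on the $r$ components, root it, and charge each good non-root index $i\in I$ to the bundle of added edges that realizes the tree edge from $i$ to its parent $p(i)$. By Lemma~\ref{lem:permutations} there is a drawing $s$ with $|\pi^{(s)}_i-\pi^{(s)}_{p(i)}|\ge t$, so in $H_s$ this bundle contains a curve with $\Omega(tm)=\Omega(nr^{-1/k})$ edges; since the edge count is drawing-independent, the bundle contains that many added edges. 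The bundles for distinct good indices correspond to distinct edges of $T$ and are edge-disjoint, so summing over the $\Theta(r)$ good non-root components yields a total of $\Omega\!\left(r\cdot nr^{-1/k}\right)=\Omega(nr^{1-1/k})$ added edges, proving the bound. The one bookkeeping point to check is that these bundles stay genuinely disjoint even when $\mathcal H$ uses high-degree added vertices as junctions; if needed, I enforce this by charging each good component only to the first edge-bundle on its path toward the root and arguing that these bundles are routed through disjoint stretches of the corridor.
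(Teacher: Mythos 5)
Your proposal follows the same skeleton as the paper's proof---Lemma~\ref{lem:permutations}, a geometric construction in which permutation displacement forces $\Omega(t\cdot n/r)$-edge connectors, then a charging argument over the tree structure of a minimal augmentation---but the two steps that carry the theorem are left unproven, and one of them fails as described. The paper's gadget is $r$ \emph{nested} regular $n/r$-gons whose sizes are chosen so that any segment joining two non-consecutive vertices of $P_i$ enters the interior of $P_{i-1}$, with the deleted edge alternating between the leftmost and rightmost sides; nesting makes the ``go around'' claim topologically airtight, since every curve between rings must cross each intermediate ring through its single gap, and the size condition prevents long chords from skipping more than one vertex at a time. Your corridor between two finite walls is not airtight: a path can exit above the topmost comb, round the end of a wall with $O(1)$ edges, and re-enter from below, so joining levels $a<b$ costs on the order of the \emph{cyclic} distance $\min\{b-a,\,(a-1)+(r-b)\}$ level-traversals, not $b-a$. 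Lemma~\ref{lem:permutations} only controls linear displacement: the pair $\pi^{(s)}_i=1$, $\pi^{(s)}_j=r$ satisfies \eqref{eq:perm} with room to spare yet is cheap to connect in your maze. (This is repairable---rerun the lemma for cyclic distance, where the probability bound becomes $4t/(r-1)$, or nest the components as the paper does---but as written the construction breaks.) Worse, you explicitly defer the claim that each level traversal costs $\Omega(m)$ edges, calling it ``the main obstacle''; since that claim \emph{is} the geometric content of the theorem, deferring it leaves the proof with a hole exactly where the difficulty lies.

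The counting step is also not carried through. When added Steiner vertices serve as junctions (say, a single hub vertex joined by a path to every component), the ``bundle of added edges realizing the tree edge from $i$ to $p(i)$'' is not well defined as an edge-disjoint object, and the geometric lower bound applies only to an entire component-to-component path, whose length may sit in a stretch shared by many of your charges; your proposed fix (``first edge-bundle toward the root'') is a gesture, not an argument. The paper resolves precisely this issue: an edge-minimal $\mathcal H$ is a tree, the traversal of its outer face is a closed walk using each edge exactly twice, and splitting that walk at the vertices of $\mathcal G$ yields $r$ subpaths with no $\mathcal G$-vertex in their interiors, each meeting two components, every component meeting at least one. Each such subpath joining a good component to another component inherits the $\Omega(tn/r)$ bound in full, and since each edge of $\mathcal H$ lies in at most two subpaths, summing loses only a factor of $2$. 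You would need this outer-face doubling argument, or an equivalent, to make your final sum legitimate.
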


\begin{proof}
Since the lemma only claims an asymptotic result, we may assume without
loss of generality that $r$ is even and that $2r$ divides $n$.

The graph $\mathcal G$ consists of $r$ disjoint paths,
$\mathcal{C}_1,\ldots,\mathcal{C}_r$, each of length $n/r$.  Each of the
drawings $G_1$,\ldots,$G_k$ has the vertices of $\mathcal G$ on the
same point and edge set. The point set consists of the vertices of
$r$ nested regular $n/r$-gons, $P_1,\ldots,P_r$, each centered at the
origin and having nearly the same size. Refer to Figure~\ref{figure:lower-bound} (left). More precisely, $P_1\subset
P_2\subset\cdots\subset P_r$ and the sizes are chosen so that any segment
joining two non-consecutive vertices of $P_i$ intersects the interior
of~$P_{i-1}$.
The drawings $G_1,\ldots,G_k$ are obtained from the permutations
$\pi^{(1)},\ldots,\pi^{(k)}$ given by Lemma~\ref{lem:permutations}.
In the drawing $G_x$, the path $\mathcal C_i$ is drawn on the vertices
of $P_{\pi^{(x)}_i}$. If $y=\pi^{(x)}_i$ is even, the drawing uses
all the edges of $P_y$ except the left-most edge.  If $y$ is odd, the
drawing uses all the edges of $P_y$ except the right-most edge.

\begin{figure}
  \centering{
    \begin{tabular}{c@{\hspace{1cm}}c}
      \includegraphics{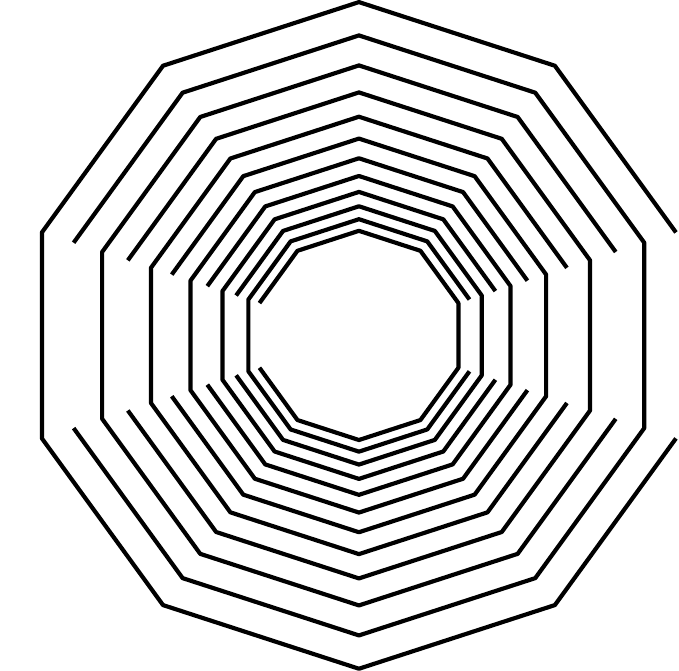} & \includegraphics{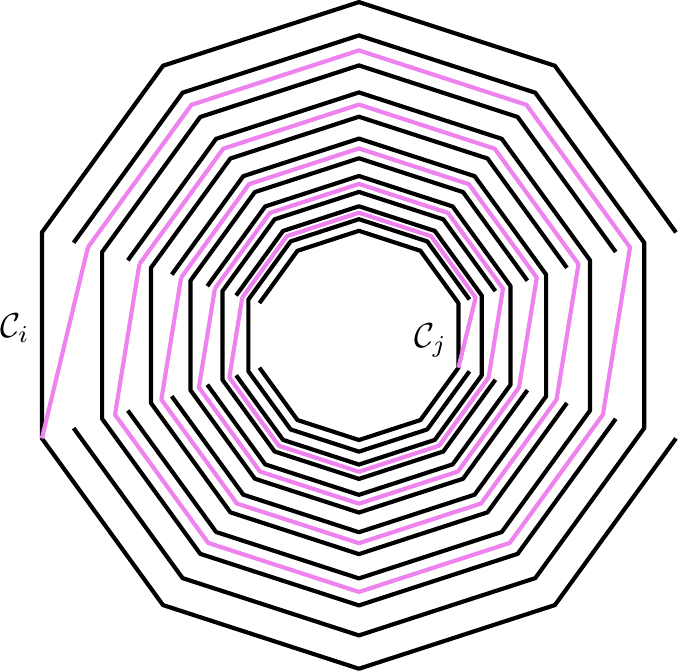}
    \end{tabular}
  }
  \caption{In the construction in Theorem~\ref{thm:lower-bound}, all drawings use the same set of vertices and line segments and the drawing of a path that joins $\mathcal C_i$ to $\mathcal C_j$ must travel around all paths drawn between the drawing of $\mathcal C_i$ and $\mathcal C_j$.}
  \label{figure:lower-bound}
\end{figure}

Now, without loss of generality, consider some edge-minimal compatible
augmentation $\mathcal H$ of $\mathcal G$.  For each component
$\mathcal{C}_i$ of $G$, let $T_i$ be any path in $\mathcal H$ that has
one endpoint on $\mathcal C_i$, one endpoint on some other component
$\mathcal{C}_j$, $j\neq i$, and no vertices of $\mathcal G$ in its
interior.
Now, for each of the $r/2$ indices $i\in\{1,\ldots,r\}$ that satisfy
\eqref{eq:perm}, the path $T_i$ joins a vertex of
$P_{\pi^{(s)}_i}$ to a vertex of $P_{\pi^{(s)}_j}$, $j\neq
i$, and $|\pi^{(s)}_i-\pi^{(s)}_i|\ge t$.  This path must
have length $\Omega(tn/r)$ since it has to ``go around'' the
paths between $P_{\pi^{(s)}_i}$ and $P_{\pi^{(s)}_j}$; see
Figure~\ref{figure:lower-bound} (right).

Thus far, we have shown that for at least $r/2$ values of
$i\in\{1,\ldots,r\}$, the component $C_i$ is the endpoint of a
path, $T_i$, of length at least $\Omega(tn/r)=\Omega(nr^{-1/k})$.
It is tempting to claim the result at this point, since
$(r/2)\cdot\Omega(nr^{-1/k})=\Omega(nr^{1-1/k})$. Unfortunately, there
is a little more work that needs to be done, since two such paths $T_i$
and $T_j$ may not be disjoint, so summing their lengths double-counts
the contribution of the shared portion.

To finish up we note that, since the augmentation $\mathcal{H}$ is minimal,
it is a tree; $\mathcal G$ contains no cycles, so any cycle in $\mathcal H$ contains an edge not in $\mathcal G$ that could be removed.  Now, observe that if we traverse the outer face of (any planar drawing of) $\mathcal H$ then we obtain a non-simple path, $P$, that traverses each edge of $\mathcal{H}$ exactly twice. If we consider the set of maximal subpaths of $P$ with no vertex of $\mathcal G$ in their interior, we obtain a set of $r$ edge-disjoint paths, $Q_1,\ldots,Q_{r}$ and, for every component $\mathcal C_i$ of $\mathcal G$, there is a vertex of $\mathcal C_i$ that is an endpoint of at least one such path.  Therefore, from the preceding discussion, the total length of $Q_1\ldots,Q_{r}$ is $\Omega(nr^{1-1/k})$.  But since each edge of $\mathcal H$ appears at most twice in these subpaths, we conclude that $\mathcal H$ has $\Omega(nr^{1-1/k})$ edges.  Since $\mathcal H$ is a tree, it has $\Omega(nr^{1-1/k})$ vertices.
\end{proof}

\section*{Acknowledgement}

This work was initiated at the \emph{Second Workshop on Geometry and Graphs},
held at the Bellairs Research Institute, March 9-14, 2014.  We are
grateful to the other workshop participants for providing a stimulating
research environment.


\bibliographystyle{plain}
\bibliography{augmentations}

\appendix
\newpage
\section{Shortest Tour in the Uniform Norm}
\label{app:uniform-norm}

Under the \emph{$\ell_\infty$ metric}, the distance between two points $x=(x_1,\ldots,x_k)$ and $y=(y_1,\ldots,y_k)$ is $\|x-y\|_\infty = \max\{|x_i-y_i|:i\in\{1,\ldots,k\}\}$.  

\begin{lemma}\label{lemma:tsp}
  Let $P$ be a set of $r\ge 2$ points contained in the $k$-dimensional
  cube $[0,1]^k$, for $k\ge 2$.  Then there exists a spanning path of $P$
  whose length under the $\ell_\infty$ metric is at most $cr^{1-1/k}$,
  where $c$ is a universal constant. In particular, $c$ does not depend
  on $k$ or $r$.
\end{lemma}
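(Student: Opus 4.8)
The plan is to adapt the classical slab (strip) argument of Few~\cite{few:shortest} and Moran~\cite{moran:on} to $k$ dimensions, exploiting the $\ell_\infty$ norm to eliminate the dimension-dependent factor that appears in the Euclidean setting. First I would dispose of the regime $r<2^k$ by a trivial bound: any ordering of $P$ yields a spanning path with $r-1$ edges, and since every two points of $[0,1]^k$ are at $\ell_\infty$-distance at most $1$, this path has length at most $r-1$. When $r<2^k$ we have $r^{-1/k}>1/2$, hence $r^{1-1/k}>r/2$, so the path length is below $2r^{1-1/k}$, already within the claimed bound.

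For the main regime $r\ge 2^k$ I would set $m=\lfloor r^{1/k}\rfloor\ge 2$ and partition $[0,1]^k$ into $m^{k-1}$ axis-parallel ``tubes'', obtained by gridding the first $k-1$ coordinates into intervals of length $1/m$ and leaving the last coordinate free (resolving boundary points by a half-open convention). The spanning path is then built by listing the tubes in an arbitrary order and, inside each nonempty tube, visiting its points sorted by their last coordinate $x_k$, concatenating these monotone subpaths across tubes.

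The length estimate splits into two parts. Within a single tube all points share a cross-section cell of side $1/m$, so a within-tube edge has $|\Delta x_i|\le 1/m$ for every $i\le k-1$ and therefore $\ell_\infty$-length at most $\tfrac{1}{m}+|\Delta x_k|$; summing over that tube's edges, the $|\Delta x_k|$ terms telescope to at most $1$ because the points are sorted in $x_k$. Adding over the at most $m^{k-1}$ nonempty tubes gives a within-tube total of at most $r/m+m^{k-1}$. Each of the at most $m^{k-1}$ transition edges between consecutive tubes has $\ell_\infty$-length at most $1$ (the diameter of the cube), contributing at most $m^{k-1}$. The grand total is thus at most $r/m+2m^{k-1}$, and with $m=\lfloor r^{1/k}\rfloor$ the estimates $m^{k-1}\le r^{1-1/k}$ and $r/m\le 2r^{1-1/k}$ (using $m\ge r^{1/k}/2$ when $r^{1/k}\ge 2$) yield a bound of $4r^{1-1/k}$, so $c=4$ works in both regimes.

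The one point requiring genuine care — and the reason for the case split — is keeping the constant $c$ independent of $k$. The tempting choice $m=\lceil r^{1/k}\rceil$ introduces a factor $(1+r^{-1/k})^{k-1}$ into $m^{k-1}$, which blows up with $k$ precisely when $r^{1/k}$ is close to $1$; isolating the range $r<2^k$, where the trivial path already suffices, sidesteps this entirely. I expect the remaining work to be routine bookkeeping of the constants and the half-open cell convention. The conceptual content is simply that in the $\ell_\infty$ norm both the cross-sectional spread within a tube and the diameter of the cube are $O(1/m)$ and $O(1)$ respectively, with no $\sqrt{k}$ penalty, so that balancing the two contributions at $m\approx r^{1/k}$ pins the total at $\Theta(r^{1-1/k})$.
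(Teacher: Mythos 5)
Your proof is correct, but it takes a genuinely different route from the paper. The paper follows Moran's argument: it bounds the $\ell_\infty$-length of a greedy spanning tree, using a volume-packing argument to show that among $i$ points in $[0,1]^k$ some pair is at distance at most $2i^{-1/k}$, sums $\sum_{i=2}^r 2i^{-1/k}=O(r^{1-1/k})$ via an integral (checking that the factor $2/(1-1/k)\le 4$ is dimension-free), and then converts the tree into a path of at most twice its length by citing the standard doubling/shortcutting result. You instead construct the path directly by Few's strip method: partition the cube into $m^{k-1}$ tubes with $m=\lfloor r^{1/k}\rfloor$, sort within tubes by the free coordinate so the $|\Delta x_k|$ terms telescope, and charge $1/m$ per within-tube edge and $1$ per tube boundary and per transition, giving $r/m+2m^{k-1}\le 4r^{1-1/k}$. (The paper explicitly remarks that Few's stabbing argument would also work, but does not carry it out.) Your approach buys a self-contained, fully constructive proof with the small explicit constant $c=4$ and no appeal to the tree-to-path conversion; the paper's approach is shorter on the page but leaves the constant implicit and less elementary. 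Your case split at $r<2^k$ --- where the trivial $(r-1)$-edge path already beats $2r^{1-1/k}$, avoiding the $(1+r^{-1/k})^{k-1}$ blow-up from rounding $m$ up when $r^{1/k}$ is near $1$ --- is exactly the right care needed to keep $c$ independent of $k$, and plays the role that the bound $2/(1-1/k)\le 4$ plays in the paper's integral estimate. All your individual estimates check out: $m^{k-1}\le r^{1-1/k}$ from $m\le r^{1/k}$, and $r/m\le 2r^{1-1/k}$ from $\lfloor x\rfloor\ge x/2$ for $x\ge 1$ applied with $x=r^{1/k}\ge 2$.
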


\begin{proof}
The following proof is a rehashing of an argument used by Moran
\cite{moran:on}. An argument of Few \cite{few:shortest}, which begins
by stabbing $[0,1]^k$ with a grid of $\Theta(r^{1-1/k})$ lines, could
also be used to establish the same asymptotic result.

First, we note that it is sufficient to upper-bound the $\ell_\infty$-length of the minimum spanning-tree of $P$, since this
can be transformed into a path of at most twice its length \cite{rosenkrantz.stearns.ea:analysis}.

For any point $p\in\R^k$, the \emph{uniform ball} of radius $d$ centered at $p$,
defined as
\[
    B_\infty(p,d)=\{q\in\R^k : \|p-q\|_\infty \le d\}
\]
is a cube of side-length $2d$ and has volume $(2d)^k$.  If $d< 1$ and
$p\in [0,1]^k$, then $B_\infty(p,d)\cap [0,1]^k$ contains a cube of side-length $d$, so $B_\infty(p,d)\cap [0,1]^k$ has volume at least $d^k$.  

The preceding implies that the set $P$ contains two points $p$ and $q$,
such that $\|p-q\|_\infty \le 2r^{-1/k}$; otherwise, one could pack $r$
disjoint cubes, each of volume greater than $1/r$ into $[0,1]^k$.

Now, we can construct a spanning tree of $P$ by repeatedly taking the
pair of points $p,q\in P$ that minimize $\|p-q\|_\infty$, adding the edge $pq$
to our spanning tree and then removing $q$ from $P$. Since, at the $i$th
step of this algorithm, the set $P$ contains $r-i+1$ points, the total $\ell_\infty$-length of all the edges added to this tree is
\begin{align*}
   \sum_{i=2}^{r} 2i^{-1/k} 
     & \le 2^{1-1/k} + \int_2^r 2x^{-1/k}\,\mathrm{d}x \\
     & = 2^{1-1/k} + (2/(1-1/k))\left(r^{1-1/k}-2^{1-1/k}\right) \\
     & \le 4r^{1-1/k} + O(1) \\
     & \le Cr^{1-1/k} \enspace ,
\end{align*}
for a sufficiently large constant $C$ and any $r,k\ge 2$. Thus, the result holds for $c=2C$.
\end{proof}

By uniformly scaling the point set $P$ by a factor of $n$, we obtain the following corollary of Lemma~\ref{lemma:tsp}, which is used in our algorithm:

\begin{corollary}\label{cor:tsp}
  Let $P$ be a set of $r\ge 2$ points contained in the cube $[0,n]^k$,
  for $k\ge 2$.  Then there exists a spanning path of
  $P$ whose length under $\ell_\infty$ metric is at
  most $cnr^{1-1/k}$, where $c$ is a universal constant. In particular,
  $c$ does not depend on $k$, $r$, or $n$.
\end{corollary}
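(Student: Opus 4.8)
The plan is to reduce Corollary~\ref{cor:tsp} to Lemma~\ref{lemma:tsp} by a straightforward scaling argument, which is exactly what the preamble ``By uniformly scaling the point set $P$ by a factor of $n$'' suggests. First I would set up the scaling map explicitly: given a point set $P\subset[0,n]^k$, define $\phi\colon\R^k\to\R^k$ by $\phi(x)=x/n$, and let $P'=\phi(P)=\{p/n: p\in P\}$. Since each coordinate of each point of $P$ lies in $[0,n]$, each coordinate of the image lies in $[0,1]$, so $P'\subset[0,1]^k$. Moreover $|P'|=|P|=r$ (the map is a bijection, being an invertible linear map), so $P'$ is a set of $r\ge 2$ points in the unit cube and Lemma~\ref{lemma:tsp} applies to it.

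The key observation is that $\phi$ scales all $\ell_\infty$ distances by the factor $1/n$: for any $x,y$, we have $\|\phi(x)-\phi(y)\|_\infty=\|x/n-y/n\|_\infty=(1/n)\|x-y\|_\infty$, since the $\ell_\infty$ norm is homogeneous. Consequently there is a length-preserving (up to the factor $n$) correspondence between spanning paths of $P'$ and spanning paths of $P$: a spanning path visiting the points of $P'$ in some order corresponds, under $\phi^{-1}$, to a spanning path visiting the points of $P$ in the same order, and its $\ell_\infty$-length is multiplied by exactly $n$. By Lemma~\ref{lemma:tsp}, $P'$ admits a spanning path of $\ell_\infty$-length at most $cr^{1-1/k}$; pulling it back through $\phi^{-1}$ yields a spanning path of $P$ whose length is at most $n\cdot cr^{1-1/k}=cnr^{1-1/k}$, with the same universal constant $c$. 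Since $c$ was independent of $k$ and $r$ in Lemma~\ref{lemma:tsp} and the scaling introduced only the explicit factor $n$, the resulting constant remains independent of $k$, $r$, and $n$, as claimed.

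There is essentially no obstacle here: the only thing to be careful about is confirming that the scaling is a bijection preserving the cardinality $r$ and that it transforms Hamiltonian paths into Hamiltonian paths (it does, since it is a bijection on the point set and the path structure is purely combinatorial in the order of the points). The homogeneity of $\|\cdot\|_\infty$ does all the metric work. I would therefore state the corollary as an immediate consequence, devoting the proof to recording the scaling map, the distance-homogeneity identity, and the resulting multiplication of the bound by $n$.
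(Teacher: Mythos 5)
Your proposal is correct and matches the paper's argument exactly: the paper derives the corollary from Lemma~\ref{lemma:tsp} by the same uniform scaling by a factor of $n$, relying on the homogeneity of the $\ell_\infty$ norm. You have simply written out in full the one-line scaling reduction the paper states implicitly.
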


\end{document}